\theoremstyle{plain}
\newtheorem{theorem}{Theorem}[section]
\newtheorem{lemma}[theorem]{Lemma}
\theoremstyle{remark}
\newtheorem{definition}[theorem]{Definition}
\newtheorem{cor}[theorem]{Corollary}
\def\e{\varepsilon}
\newcommand{\bm}[1]{ \mathbf{#1} }
\def\matA{{\bm A}}
\def\matB{{\bm B}}
\def\matC{{\bm C}}
\def\matI{{\bm I}}
\def\matM{{\bm M}}
\def\matQ{{\bm Q}}
\def\matR{{\bm R}}
\def\matV{{\bm V}}
\def\matX{{\bm X}}
\def\matY{{\bm Y}}
\def\matZ{{\bm Z}}
\def\mate{{\bm e}}
\def\matb{{\bm b}}
\def\matc{{\bm c}}
\def\matv{{\bm v}}
\def\matx{{\bm x}}
\def\maty{{\bm y}}
\def\matDelta{{\bm \Delta}}
\def\mat0{{\bm 0}}
\newcommand{\be}{\begin{equation}}
\newcommand{\ee}{\end{equation}}
\newcommand{\ba}{\begin{array}}
\newcommand{\ea}{\end{array}}
\begin{document}

\begin{titlepage}

\renewcommand{\thepage}{}
%\begin{frontmatter}
\title{Reducing Noise Level in Differential Privacy through Matrix Masking}
%\title{A sample article title with some additional note\thanksref{t1}}

\author{A. Adam Ding, Samuel S. Wu, Guanhong Miao and Shigang Chen}

\date{}
\maketitle
\begin{abstract}
Differential privacy schemes have been widely adopted in recent years to address issues of data privacy protection.
We propose a new Gaussian scheme combining with another data protection technique, called random orthogonal matrix masking, to achieve $(\e, \delta)$-differential privacy (DP) more efficiently.
We prove that the additional matrix masking significantly reduces the rate of noise variance required in the Gaussian scheme to achieve $(\e, \delta)-$DP in big data setting.
Specifically, when $\e \to 0$, $\delta \to 0$, and the sample size $n$ exceeds the number $p$ of attributes by $(n-p)=O(ln(1/\delta))$, the required additive noise variance to achieve $(\e, \delta)$-DP is reduced from $O(ln(1/\delta)/\e^2)$ to $O(1/\e)$.
With much less noise added, the resulting differential privacy protected pseudo data sets allow much more accurate inferences, thus can significantly improve the scope of application for differential privacy.
\end{abstract}

\end{titlepage}

%\end{frontmatter}
%%%%%%%%%%%%%%%%%%%%%%%%%%%%%%%%%%%%%%%%%%%%%%
%% Please use \tableofcontents for articles %%
%% with 50 pages and more                   %%
%%%%%%%%%%%%%%%%%%%%%%%%%%%%%%%%%%%%%%%%%%%%%%
%\tableofcontents

\section{Introduction}

In the past decades, much research interest has been devoted to the issue of how to share and publish data sets while  protecting the privacy of individuals in the data set.
The differential privacy (DP)~\citep{dwork2006differential,dwork2006CalibratingNoise,dwork2008differential} provides a quantitative measure for privacy loss in data release. In brief, DP guarantees that the distributions over potential outputs are statistically close for any two neighboring databases, and thus the chance of an individual being identified is low.  Over the years, DP schemes have been widely adopted, through adding noises, to release data statistics with given privacy cost constraints. For instance, DP has been implemented in real world applications developed by Google~\citep{google}, Microsoft~\citep{microsoft} and the U.S. Census Bureau~\citep{Bureau}.

In recent literature, there has been an emphasis on linking DP to statistical concepts. Objective perturbation was applied to build differentially private linear regression and logistic regression model~\citep{DP_3,DP_4}. Another well-known method, Johnson-Lindenstrauss transform, has been investigated to provide DP by mapping original data to a lower-dimensional space~\citep{DP_1,DP_2}. The DP framework has been expanded to include hypothesis tests~\citep{DP_6,DP_7}, deep learning~\citep{DP_5}, network analysis~\citep{DP_network} and Bayesian inference~\citep{DP_8}. Most DP schemes focus on releasing some specific summary statistics rather than a whole pseudo data set (i.e., a data set with noise perturbations of its entries), which is the focus of this work.
Most recently the concept of local differential privacy (LDP)~\citep{LDP} has come to the fore where users perturb their data and then send the perturbed data to an aggregator for summary and release. The aggregator does not know the raw user data. LDP provides a stronger privacy guarantee than the central DP approach where users directly share their raw data with a trusted aggregator that performs perturbation. A major limitation of LDP is that decentralized perturbation causes larger error in order to achieve the same privacy level.

In another thread of research, random orthogonal perturbation has been deployed for privacy protection~\citep{perturbation,AX,wu2017new} where a masked data set $\matA\matX$ is published, with $\matX$ denoting the original raw data set, and $\matA$ being a random orthogonal matrix.
This pseudo data set $\matY=\matA\matX$ allows users to use many standard data analysis methods since, for linear models, it has the same sufficient statistics as the raw data set.
However, the random orthogonal matrix masking schemes do not satisfy the theoretical DP constraints.

In this paper, we investigate combining the random orthogonal matrix masking with noise addition methods to achieve DP on the released pseudo data set. Compared to the traditional method of adding noise directly to the raw data set, we show that the matrix masking greatly reduces the magnitude of noise needed to achieve DP, thus allowing more accurate statistical inferences on the released pseudo data set. Moreover, data utility does not decrease significantly due to the property of the orthogonal masking matrix and the small magnitude of noise required for DP. This opens up many more applications for using differential privacy schemes.

\section{Mathematical Setup for Collecting and Releasing Pseudo-Data Set Satisfying Differential Privacy}

Mathematically, a raw data set $\matX$ is represented as an  $n \times p$ matrix where the columns are the features (variables) and the rows correspond to individuals in the data.
Here we consider the problem of collecting a pseudo data set, represented as an $n \times p$ matrix $\matY$, which is a perturbed version of raw data $\matX$. The goal of privacy preserving perturbation mechanism is to generate $\matY$ such that, while keeping some population statistical information similar to $\matX$ for data analysis utility, makes it hard to inference the raw data for each individual (each row of  $\matX$) from $\matY$. Differential privacy (DP) formalizes the privacy requirement as that, if two raw data sets $\matX$ and $\matX'$ only differs in one individual (one row of the matrix), then it is hard to distinguish whether $\matY$ was generated from $\matX$ or from $\matX'$. Mathematically, the DP requirements can be specified as follows.
\begin{definition}\label{def:Neighbors}
Two data sets $\matX$ and $\matX'$ are neighbors if $\|\matX - \matX' \| \le 1$ and they differ only in one entry (one row).\footnote{Some literature defines the difference between two neighbors as one of them missing an entry that the other has~\citep{dwork2008differential}. Here we define neighbors as that they differ in one entry and the norm of the difference is bounded by one~\citep{dwork2014algorithmic}. For technical simplicity, we use the $L_2-$norm in this paper. Other norms were also used in literature but various $L_p-$norms can be bounded with each other mathematically.} Here and following, $\| \cdot \|$ denotes the $L_2-$norm.
\end{definition}

\begin{definition}\label{def:DiffPriv}
A data perturbation mechanism $\matY$ satisfies the $(\e, \delta)$-differential privacy \citep{DP_0} if, for any set $\mathcal{S}$ and any pair of neighbors $\matX$ and $\matX'$,
\be\label{eq:DP}
P_\matX[\matY \in \mathcal{S}] \le e^\e P_{\matX'}[\matY \in \mathcal{S}] + \delta.
\ee
\end{definition}
Here and in the following, $P_\matX$ denotes the probability when the raw data set is $\matX$. Also, we will use the shorthand notation $(\e, \delta)$-DP to denote $(\e, \delta)$-differential privacy in the rest of the paper.

The differential privacy literature generally focus on getting a privacy-protected statistic $F(\matY)$ which is used for a pre-specified statistical model analysis. That is, the requirement in the DP Definition~\ref{def:DiffPriv} is instead $P_\matX[F(\matY) \in \mathcal{S}] \le e^\e P_{\matX'}[F(\matY) \in \mathcal{S}] + \delta$.
Here we want to collect a pseudo data set not specific for only one data analysis, but to allow more general data analysis by users in the future. Therefore, no statistic $F$ is specified, and the DP Definition~\ref{def:DiffPriv} is needed for privacy protection starting at the collection stage. However, achieving DP Definition~\ref{def:DiffPriv} often requires much higher magnitude of noise than achieving DP only for a specific statistic $F$, thus reducing the required noise magnitude is very important for the wide application of  such a privacy preserving data collection scheme in practice.

Initially the differential privacy protection was used for queries to a central data manager who have access to the raw data $\matX$. For privacy preserving data collection, we want to remove the role of a central data manager so that every party has access to only the pseudo data set $\matY$ and its own data. This can be achieved through a local differential privacy scheme where each individual provides perturbed version of its own data to the data collector. Let $\matX_i$ and $\matY_i$ denote the $i$-th row of respectively $\matX$ and $\matY$ so that
\be\label{eq:rows}
\matX = \left(\ba{c} \matX_1 \\ \matX_2 \\ \vdots \\ \matX_n \ea \right),  \qquad \matY = \left(\ba{c} \matY_1 \\ \matY_2 \\ \vdots \\ \matY_n \ea \right).
\ee
Then in a local $(\e, \delta)$-DP scheme, $\matY_i$ only depends on the raw data $\matX_i$ of the $i$-th individual but not other individual's raw data. Such a local $(\e, \delta)$-DP scheme is generally achieved through noise addition $\matY_i = \matX_i + \matC_i$. The noise $\matC_i$ is most often generated as either Laplace or Gaussian noise. The advantage of the Laplace noise is that it achieves $(\e, 0)$-DP with $\delta=0$. For multi-variate ($p>1$) case, Gaussian noise is often used since it can achieve the $(\e, \delta)$-DP with $L_2-$norm while Laplace noise requires usage of $L_1-$norm. We consider such a Gaussian local $(\e, \delta)$-DP scheme as the baseline from which we improve with matrix masking.

Setting (A) -- Add noise to achieve $(\e, \delta)$-DP: release
\be\label{eq:modelA}
\matY =  \matX + \matC \qquad \mbox{ where } \matC \sim NI_{n \times p}(0, \sigma^2).
\ee
$NI_{n \times p}(0, \sigma^2)$ denotes an $n \times p$ matrix whose elements are independently identically distributed $N(0, \sigma^2)$ random variables.

In contrast, we propose to combine the noise addition with masking by a random orthogonal $n \times n$ matrix $\matA$ to achieve $(\e, \delta)$-DP as in the following setting.

(B) Add noise and apply matrix masking: release
\be\label{eq:modelB}
\matY =  \matA (\matX + \matC) \qquad \mbox{ where } \matC \sim NI_{n \times p}(0, \sigma^2).
\ee
Each individual still adds Gaussian noise $\matC_i$ to its raw data $\matX_i$. Then using the triple-matrix masking scheme~\cite{ding2020privacy} to collect a masked pseudo data set $\matA (\matX + \matC)$ from all data providers without allowing any party in the process to gain more useful information other than the published final $\matA (\matX + \matC)$.

We will show that under the setting (B) with matrix masking, the noise magnitude $\sigma^2$ needed to achieve $(\e, \delta)$-DP can be much smaller than that needed under the noise only setting (A). First, it is obvious that an $(\e, \delta)$-DP scheme under setting (A) will remain $(\e, \delta)$-DP with the extra matrix masking in setting (B). % or setting (C).
We state this in the following lemma whose proof is provided in subsection~\ref{sec:proof.BC<A}.
\begin{lemma}\label{lem:BC<A}
Assume that, for a given noise variance value $\sigma=\sigma_0$, the mechanism in setting (A) is $(\e, \delta)$-DP. Then for this value $\sigma=\sigma_0$, the mechanism in setting (B) is also $(\e, \delta)$-DP.
\end{lemma}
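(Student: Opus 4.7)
My plan is to reduce both settings (B) and (C) back to the assumed $(\e,\delta)$-DP of setting (A), but via two rather different arguments, reflecting the fact that in (B) the masking is applied \emph{after} the noise addition while in (C) it is applied \emph{before}.

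For setting (B), the key observation is that $\matY = \matA(\matX + \matC) = \matA\,\matY_A$, where $\matY_A = \matX + \matC$ is the setting (A) output and $\matA$ is drawn independently of $(\matX, \matC)$. Multiplication by $\matA$ is therefore a (randomized) post-processing of $\matY_A$. I would condition on $\matA = \bm{a}$, use orthogonality of $\bm{a}$ to rewrite the event $\{\matA\,\matY_A \in \mathcal{S}\}$ as $\{\matY_A \in \bm{a}^T\mathcal{S}\}$, apply the assumed $(\e,\delta)$-DP of setting (A) to the set $\bm{a}^T\mathcal{S}$, and then integrate over $\matA$. Since both the $e^\e$ factor and the additive $\delta$ commute with the expectation over $\matA$, setting (B) inherits $(\e,\delta)$-DP.

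For setting (C), the release $\matY = \matA\matX + \matC$ is not a post-processing of $\matY_A$, so a different ingredient is needed: the rotational invariance of the isotropic Gaussian noise. Conditional on $\matA = \bm{a}$, multiplying by $\bm{a}^T$ inside the event gives
\[
P[\matY \in \mathcal{S} \mid \matA = \bm{a}] \;=\; P[\matX + \bm{a}^T\matC \in \bm{a}^T\mathcal{S}].
\]
Because each column of $\matC$ is $N(\mat0, \sigma^2\matI)$, we have $\bm{a}^T\matC \stackrel{d}{=} \matC$ for every orthogonal $\bm{a}$, so the right-hand side equals $P[\matY_A(\matX) \in \bm{a}^T\mathcal{S}]$. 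The assumed $(\e,\delta)$-DP of setting (A) then bounds this by $e^\e P[\matY_A(\matX') \in \bm{a}^T\mathcal{S}] + \delta$; reversing the change of variables and integrating over $\matA$ yields $(\e,\delta)$-DP for setting (C).

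I do not anticipate a serious obstacle here. The one substantive fact in play is $\bm{a}^T\matC \stackrel{d}{=} \matC$ for every fixed orthogonal $\bm{a}$, which follows immediately from the rotation invariance of the spherical multivariate normal distribution applied column by column. The rest is careful bookkeeping around the conditioning on $\matA$ and a standard orthogonal change of variables.
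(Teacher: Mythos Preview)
Your proposal is correct and follows essentially the same approach as the paper's proof: for setting (B) you condition on $\matA=\bm a$, rewrite the event via $\bm a^{-1}\mathcal S=\bm a^T\mathcal S$, apply the assumed $(\e,\delta)$-DP of setting (A), and integrate over $\matA$; for setting (C) you additionally invoke the rotational invariance $\bm a^T\matC\stackrel{d}{=}\matC$ of the isotropic Gaussian before applying the same conditioning argument. The paper does exactly this (writing $\matA^{-1}$ in place of $\bm a^T$ and phrasing the Gaussian invariance as a change of variable $\matC^*=\matA\matC$), so there is no substantive difference.
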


While the above Lemma states that the introduction of the extra matrix masking will never require a stronger noise variance condition to achieve $(\e, \delta)$-DP, we want to explore when the mechanisms in setting (B) with matrix masking will require a weaker condition to achieve $(\e, \delta)$-DP. To study this, we first state another mathematical condition for a release mechanism $Y(\matX)$ to satisfy $(\e, \delta)$-DP. (The proof is provided in subsection~\ref{sec:proof.DPdensity})
\begin{lemma}\label{lem:DPdensity}
Assume that, given any raw data set $\matX$, $Y$ follows a distribution that is absolutely continuous with respect to the Lebesgue measure. And let $p_{\matX}$ denote the probability density of $Y$ when the raw data set is $\matX$. Denote
$$
\mathcal{S}_{\matX,\matX'} = \{\maty: p_{\matX}(\maty) > e^\e p_{\matX'}(\maty)\}.
$$
Then the mechanism satisfies $(\e, \delta)$-DP if
\be\label{eq:DP-bound0}
P_\matX[Y \in \mathcal{S}_{\matX,\matX'}] \le \delta
\ee
for all neighboring $\matX$ and $\matX'$.

Also, if $P_\matX[Y \in \mathcal{S}_{\matX,\matX'}] > \delta$ for any neighboring $\matX$ and $\matX'$, then the mechanism is not $(\e', \delta')$-DP for any $0<\e'<\e$ and $\delta' =(1-\frac{e^{\e'}}{e^\e})\delta$.
\end{lemma}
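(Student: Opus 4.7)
The plan is to prove the two implications separately, both by partitioning the event space according to the ``bad'' set $\mathcal{\bar S}_{\matX,\matX'}$ where the density ratio already exceeds $e^\e$.

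For the first (sufficient) direction, I would fix arbitrary neighbors $\matX, \matX'$ and an arbitrary measurable set $\mathcal{S}$, and decompose $\mathcal{S} = (\mathcal{S} \cap \mathcal{\bar S}_{\matX,\matX'}) \cup (\mathcal{S} \setminus \mathcal{\bar S}_{\matX,\matX'})$. On the complement of $\mathcal{\bar S}_{\matX,\matX'}$ the pointwise inequality $p_{Y(\matX)}(\maty) \le e^\e p_{Y(\matX')}(\maty)$ holds by definition, so integrating gives $P[Y(\matX) \in \mathcal{S} \setminus \mathcal{\bar S}_{\matX,\matX'}] \le e^\e P[Y(\matX') \in \mathcal{S} \setminus \mathcal{\bar S}_{\matX,\matX'}] \le e^\e P[Y(\matX') \in \mathcal{S}]$. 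The remaining piece is controlled by the hypothesis, since $P[Y(\matX) \in \mathcal{S} \cap \mathcal{\bar S}_{\matX,\matX'}] \le P[Y(\matX) \in \mathcal{\bar S}_{\matX,\matX'}] \le \delta$. Adding the two bounds yields Definition~\ref{def:DiffPriv}.

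For the second (necessity) direction, suppose $P[Y(\matX) \in \mathcal{\bar S}_{\matX,\matX'}] > \delta$ for some neighboring pair. I take the witnessing set $\mathcal{S} = \mathcal{\bar S}_{\matX,\matX'}$ in Definition~\ref{def:DiffPriv}. On this set, integrating the strict pointwise inequality $p_{Y(\matX)}(\maty) > e^\e p_{Y(\matX')}(\maty)$ gives $P[Y(\matX') \in \mathcal{S}] < e^{-\e} P[Y(\matX) \in \mathcal{S}]$. Substituting this into the $(\e', \delta')$-DP inequality yields
\begin{equation*}
P[Y(\matX) \in \mathcal{S}] - e^{\e'} P[Y(\matX') \in \mathcal{S}] > \bigl(1 - e^{\e'-\e}\bigr) P[Y(\matX) \in \mathcal{S}] > \Bigl(1 - \frac{e^{\e'}}{e^\e}\Bigr)\delta = \delta',
\end{equation*}
where the last strict inequality uses $P[Y(\matX) \in \mathcal{S}] > \delta$ together with $0 < 1 - e^{\e'-\e} < 1$ (which follows from $0 < \e' < \e$). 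This shows $(\e', \delta')$-DP fails on $\mathcal{S}$.

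I do not anticipate a genuine obstacle: the argument is essentially a verification that $\mathcal{\bar S}_{\matX,\matX'}$ is the worst-case test set, so the bound on its probability under $Y(\matX)$ is both necessary and sufficient. The only care points are handling the measurability of $\mathcal{\bar S}_{\matX,\matX'}$ (which follows from absolute continuity and measurability of the densities) and ensuring the strict inequality is preserved through integration, which holds provided $P[Y(\matX) \in \mathcal{S}] > 0$; that positivity is guaranteed by the hypothesis $P[Y(\matX) \in \mathcal{\bar S}_{\matX,\matX'}] > \delta \ge 0$.
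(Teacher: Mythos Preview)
Your proposal is correct and follows essentially the same route as the paper: in both directions the argument decomposes via the set $\mathcal{\bar S}_{\matX,\matX'}$, uses the pointwise density inequality on its complement for sufficiency, and takes $\mathcal{S}=\mathcal{\bar S}_{\matX,\matX'}$ as the witness set for the necessity direction. The only cosmetic difference is that the paper obtains the needed strict inequality solely from $P[Y(\matX)\in\mathcal{S}]>\delta$ (using the non-strict bound $P[Y(\matX')\in\mathcal{S}]\le e^{-\e}P[Y(\matX)\in\mathcal{S}]$), which spares you the extra care point about preserving strictness under integration.
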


We first derive sufficient and necessary conditions on the noise magnitude $\sigma$ for achieving ~\eqref{eq:DP-bound0} in setting (A), and the sufficient and necessary conditions are very close to each other. Then we show that achieving $(\e, \delta)$-DP in setting (B) can require much smaller $\sigma$ than those required by the necessary condition for ~\eqref{eq:DP-bound0} in setting (A). Thus Lemma~\ref{lem:DPdensity} implies that it is indeed easier to achieve $(\e, \delta)$-DP in setting (B).

%We will show that it is easier to achieve condition~\eqref{eq:DP-bound0} in settings (B) and (C) than in setting (A). Then Lemma~\ref{lem:DPdensity} implies that it is easier to achieve $(\e, \delta)$-DP in settings (B) and (C) than in setting (A).

\section{Main Analysis Results}\label{sec:results}

For the technical analysis, we assume that the elements of the raw data $\matX$ are bounded and we can scale them to within magnitude of one. Let $x_{ij}$ denote the $j$-th element of $i$-th row $\matX_i$ in $\matX$. That is, we assume that
\be\label{eq:XboundedOne}
|x_{ij}| \le 1, \qquad \mbox{ for $1 \le i \le n$ and $1\le j \le p$.}
\ee

The following Lemma gives the formulas for the density ratio $\frac {p_{\matX}(\maty)}{p_{\matX'}(\maty)}$ under settings (A) and (B), which allows the characterization of $\mathcal{S}_{\matX,\matX'}$.

\begin{lemma}\label{lem:DenRatio}
%Let $\bar G (\matT) = e^{tr(\matT)}$, where $tr(\cdot)$ denotes the trace function of a square matrix; and $\tilde G (\matT) = \int e^{tr(\matA \matT)} d \mu(\matA)$, where $\mu(\cdot)$ is the measure for the uniform distribution on $\mathcal{O}_{n \times n}$ the group of $n \times n$ orthogonal matrices. Then we have
\be\label{eq:denRatioA0}
\frac {p_{\matX}(\maty)}{p_{\matX'}(\maty)}  = e^{\frac{\|\matX'\|^2 - \|\matX\|^2 }{2 \sigma^2}} \frac {e^{\frac{tr(\maty\matX^T)}{ \sigma^2}}}{e^{\frac{tr[\maty(\matX')^T] }{\sigma^2}}},
\mbox{ under setting (A);}
\ee
Let $\mathcal{O}_{n \times n}$ denote the group of $n \times n$ orthogonal matrices, and let $\mu(\cdot)$ be the measure for the uniform distribution on $\mathcal{O}_{n \times n}$, then
\be\label{eq:denRatioB0}
\frac {p_{\matX}(\maty)}{p_{\matX'}(\maty)} = e^{\frac{\|\matX'\|^2 - \|\matX\|^2 }{2 \sigma^2}} \frac{\int_{\matA \in \mathcal{O}_{n \times n}} e^{\frac{ tr(\matA^T\maty\matX^T)}{ \sigma^2}} d \mu(\matA)}{\int_{\matA \in \mathcal{O}_{n \times n}} e^{\frac{tr[\matA^T\maty(\matX')^T]}{\sigma^2}} d \mu(\matA)},
\mbox{ under setting (B). } % and (C).}
\ee
\end{lemma}

Comparing \eqref{eq:denRatioA0} and \eqref{eq:denRatioB0}, the matrix masking averages the densities, before taking the ratio, over the set of all data points that differ only by multiplication of an orthogonal matrix.
This makes the density ratio in equation \eqref{eq:denRatioB0} closer to $1$ than the ratio in equation \eqref{eq:denRatioA0}, thus making it easier to achieve $(\e, \delta)$-DP in setting (B).

The proofs of Lemma~\ref{lem:DenRatio} as well as proofs of main Theorem~\ref{thm:bound.settingA} and Theorem~\ref{thm:bound.settingB_Ver3} stated in Section~\ref{sec:results} will be provided later in Section~\ref{sec:proof}.

\subsection{Noise magnitude to achieve \eqref{eq:DP-bound0} under setting (A)}
We first study the setting (A) when there is only additive noise with no matrix masking.

Denote $\matDelta = \matX ' - \matX$. And let $\bar \gamma_{\delta}$ denote the upper $\delta$-quantile of the standard Gaussian distribution $N(0,1)$. Using \eqref{eq:denRatioA0}, we get a lower bound of $\sigma$ to achieve \eqref{eq:DP-bound0} under setting (A) as the following. % Theorem whose proof is provided in section~\ref{sec:proof}.
\begin{theorem}\label{thm:bound.settingA}
Under setting (A), when $\delta<1/2$ and $\e <1$, a necessary condition for \eqref{eq:DP-bound0} to hold is
\be\label{eq:DP-bound0A.nec}
\sigma \ge \frac{\| \matDelta\| \bar \gamma_{\delta}}{\e},
\ee
and a sufficient condition for \eqref{eq:DP-bound0} to hold is
\be\label{eq:DP-bound0A.suf}
\sigma \ge \frac{\| \matDelta\| \bar \gamma_{\delta}}{\e}(1+\frac{1}{2 \bar \gamma_{\delta}^2}).
\ee
Hence for \eqref{eq:DP-bound0} to hold for every pair of neighboring $\matX$ and $\matX'$ (where $\|\matDelta\| \le 1$), it is necessary that
\be\label{eq:DP-boundA.nec}
\sigma \ge \frac{  \bar \gamma_{\delta}}{\e};
\ee
and it is sufficient that
\be\label{eq:DP-boundA.suf}
\sigma \ge \frac{ \bar \gamma_{\delta}}{\e}(1+\frac{1}{2 \bar \gamma_{\delta}^2}).
\ee
\iffalse
%Squared_sigma version
\be\label{eq:DP-boundA.nec}
\sigma^2 \ge \frac{  \bar \gamma_{\delta}^2}{\e^2};
\ee
and it is sufficient that
\be\label{eq:DP-boundA.suf}
\sigma^2 \ge \frac{ \bar \gamma_{\delta}^2}{\e^2}(1+\frac{1}{2 \bar \gamma_{\delta}^2})^2.
\ee
\fi
\end{theorem}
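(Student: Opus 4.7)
The plan is to apply Lemma~\ref{lem:DPdensity} by computing $P[Y(\matX) \in \mathcal{\bar S}_{\matX,\matX'}]$ exactly under setting (A), reducing condition \eqref{eq:DP-bound0} to a one-dimensional Gaussian tail bound and then to a quadratic inequality in $\sigma$. Writing $\matDelta = \matX' - \matX$ and using $\bar G(\matT)=e^{\mathrm{tr}(\matT)}$, the formula \eqref{eq:denRatioA0} from Lemma~\ref{lem:DenRatio} gives a log density ratio that is linear in $\maty$:
$$
\log\frac{p_{Y(\matX)}(\maty)}{p_{Y(\matX')}(\maty)} = \frac{\|\matDelta\|^2}{2\sigma^2} - \frac{\mathrm{tr}((\maty-\matX)\matDelta^T)}{\sigma^2},
$$
so $\mathcal{\bar S}_{\matX,\matX'}$ is the half-space $\{\maty:\mathrm{tr}((\maty-\matX)\matDelta^T) < \tfrac{1}{2}\|\matDelta\|^2 - \sigma^2 \e\}$.

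Under $Y(\matX)=\matX+\matC$, the scalar $Z=\mathrm{tr}(\matC\matDelta^T)$ is a linear combination of the iid $N(0,\sigma^2)$ entries of $\matC$ with coefficient matrix $\matDelta$, so $Z\sim N(0,\sigma^2\|\matDelta\|^2)$. Therefore, with $\Phi$ the standard normal CDF,
$$
P[Y(\matX)\in \mathcal{\bar S}_{\matX,\matX'}] = \Phi\!\left(\frac{\|\matDelta\|}{2\sigma}-\frac{\sigma\e}{\|\matDelta\|}\right),
$$
and requiring this to be $\le \delta$ is, via $\Phi(-\bar\gamma_\delta)=\delta$, equivalent to $2\e\sigma^2-2\bar\gamma_\delta\|\matDelta\|\,\sigma-\|\matDelta\|^2\ge 0$. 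Since $\sigma>0$, this reduces to $\sigma\ge \sigma_*(\matDelta)$ where
$$
\sigma_*(\matDelta)=\frac{\|\matDelta\|\bigl(\bar\gamma_\delta+\sqrt{\bar\gamma_\delta^2+2\e}\bigr)}{2\e}.
$$
I would then sandwich $\sigma_*(\matDelta)$ using $\bar\gamma_\delta\le\sqrt{\bar\gamma_\delta^2+2\e}\le \bar\gamma_\delta+\e/\bar\gamma_\delta$, the upper bound coming from $\sqrt{1+x}\le 1+x/2$ with $x=2\e/\bar\gamma_\delta^2$. The lower bound yields $\sigma_*(\matDelta)\ge \|\matDelta\|\bar\gamma_\delta/\e$, the necessary condition \eqref{eq:DP-bound0A.nec}. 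The upper bound gives $\sigma_*(\matDelta)\le \tfrac{\|\matDelta\|\bar\gamma_\delta}{\e}(1+\tfrac{\e}{2\bar\gamma_\delta^2})$; invoking $\e<1$ to replace $\e$ by $1$ in the parenthetical factor delivers the sufficient condition \eqref{eq:DP-bound0A.suf}.

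To pass from per-pair to uniform bounds, I use that $\sigma_*(\matDelta)$ is linear in $\|\matDelta\|$: for sufficiency, the pair-wise bound at $\|\matDelta\|=1$ dominates every smaller $\|\matDelta\|$, giving \eqref{eq:DP-boundA.suf}; for necessity, it suffices to exhibit one neighboring pair with $\|\matDelta\|=1$ (e.g., $\matX=0$ and $\matX'$ a single-row unit vector), which forces \eqref{eq:DP-boundA.nec}. The main obstacle is bookkeeping: verifying that the sign of $\mathrm{tr}((\maty-\matX)\matDelta^T)$ in the half-space expression aligns with the left-tail of $Z$, discarding the spurious negative root of the quadratic, and using $\e<1$ cleanly to absorb $\e/(2\bar\gamma_\delta^2)$ into $1/(2\bar\gamma_\delta^2)$ to match the stated form. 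All other steps are routine computations on a scalar Gaussian.
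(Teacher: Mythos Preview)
Your proposal is correct and follows essentially the same approach as the paper: both reduce the density ratio from Lemma~\ref{lem:DenRatio} to a scalar Gaussian tail event, arriving at the equivalent condition $\bar\gamma_\delta \le \frac{\sigma\e}{\|\matDelta\|} - \frac{\|\matDelta\|}{2\sigma}$. The only cosmetic difference is that the paper verifies the sufficient bound by direct substitution of $\sigma = \frac{\|\matDelta\|\bar\gamma_\delta}{\e}(1+\tfrac{1}{2\bar\gamma_\delta^2})$ into this inequality, whereas you solve the quadratic for its exact root $\sigma_*(\matDelta)$ and sandwich it via $\sqrt{1+x}\le 1+x/2$; your route is arguably a bit cleaner but the content is the same.
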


We note that the upper $\delta$-quantile of the standard Gaussian distribution $\bar \gamma_{\delta}$ is of order $O(\sqrt{ln(\frac{1}{\delta})})$. Thus Theorem~\ref{thm:bound.settingA} states that, for \eqref{eq:DP-bound0} to hold for every pair of neighboring $\matX$ and $\matX'$, $\sigma \ge \sigma_0$ for some $\sigma_0= O(\frac{\sqrt{\ln(\frac{1}{\delta})}}{\e}) $ which agrees with the order in the sufficient condition derived by~\cite{dwork2014algorithmic}. Our Theorem~\ref{thm:bound.settingA} additionally shows that a necessary condition is really close to the sufficient condition, differing by a factor of $(1+\frac{1}{2 \bar \gamma_{\delta}^2}) = O(1+1/\sqrt{ln(\frac{1}{\delta})}) = O(1)$.
Particularly, we can have the following explicit bounds for small $\delta$ and $\e$ values.

\begin{cor}\label{cor:bound.settingA}
Under setting (A), when $\delta<0.05$ and $\e <1$, a sufficient condition for \eqref{eq:DP-bound0} to hold for every pair of neighboring $\matX$ and $\matX'$ is
$\sigma > \frac{1.7  \sqrt{\ln(\frac{1}{\delta})}}{\e}$;
a necessary condition for \eqref{eq:DP-bound0} to hold for every pair of neighboring $\matX$ and $\matX'$ is
$\sigma \ge \frac{\sqrt{\ln(\frac{1}{\delta})}}{\e}$.
\end{cor}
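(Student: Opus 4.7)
The plan is to instantiate Theorem~\ref{thm:bound.settingA} with explicit one-sided bounds on the Gaussian upper $\delta$-quantile $\bar\gamma_\delta$ valid on the range $\delta<0.05$, turning the symbolic necessary and sufficient conditions into the clean numerical statements of the corollary. So the corollary reduces to two scalar inequalities about $\bar\gamma_\delta$ as a function of $\delta$.

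For the sufficient half, I would start from \eqref{eq:DP-boundA.suf}, $\sigma \ge \frac{\bar\gamma_\delta}{\e}(1 + \frac{1}{2\bar\gamma_\delta^2})$, and show that the right-hand side is at most $\frac{1.7\sqrt{\ln(1/\delta)}}{\e}$. The key is to use an upper bound on $\bar\gamma_\delta$ in the leading factor and a lower bound on $\bar\gamma_\delta$ in the correction factor. The standard Gaussian tail $P(Z>t)\le e^{-t^2/2}$ applied at $t=\bar\gamma_\delta$ gives $\bar\gamma_\delta\le\sqrt{2\ln(1/\delta)}$. For the factor $(1+\frac{1}{2\bar\gamma_\delta^2})$, monotonicity of the quantile in $\delta$ together with the tabulated value $\bar\gamma_{0.05}\approx 1.6449$ gives $\bar\gamma_\delta\ge 1.6449$ whenever $\delta\le 0.05$, so $1+\frac{1}{2\bar\gamma_\delta^2}< 1.185$. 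Multiplying, $\bar\gamma_\delta(1+\frac{1}{2\bar\gamma_\delta^2})< 1.185\sqrt{2}\sqrt{\ln(1/\delta)} \approx 1.676\sqrt{\ln(1/\delta)} < 1.7\sqrt{\ln(1/\delta)}$, which delivers the sufficient bound.

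For the necessary half, I would start from \eqref{eq:DP-boundA.nec}, $\sigma\ge\bar\gamma_\delta/\e$, and derive a lower bound of the form $\bar\gamma_\delta\ge\sqrt{\ln(1/\delta)}$ on the range of interest. The Chernoff tail used above only produces an upper bound on $\bar\gamma_\delta$, so here one needs a sharper Gaussian estimate. My plan is to invoke a Mills-ratio lower bound such as $P(Z>t)\ge\frac{t}{\sqrt{2\pi}(t^2+1)}e^{-t^2/2}$ at $t=\sqrt{\ln(1/\delta)}$, which translates $\bar\gamma_\delta\ge\sqrt{\ln(1/\delta)}$ into a scalar inequality in $\delta$ to be verified on the range $\delta<0.05$, supplemented if needed by a direct numerical comparison of tabulated quantile values near the endpoint.

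I expect the necessary half to be the main obstacle. The upper-tail bound is the standard Chernoff estimate and the arithmetic cascades cleanly, but the quantile lower bound is delicate because the asymptotic $\bar\gamma_\delta\sim\sqrt{2\ln(1/\delta)}$ approaches $\sqrt{2}\sqrt{\ln(1/\delta)}$ only slowly, so the Mills-ratio estimate has the least slack precisely near the endpoint $\delta=0.05$. If the analytic Mills bound proves too loose at the boundary, the fallback is to restrict the analytic argument to the interior of the range and cover a small neighborhood of $\delta=0.05$ by direct numerical verification.
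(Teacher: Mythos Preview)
Your sufficient half is correct and is essentially the paper's argument: both use the Chernoff/Mills upper bound $\bar\gamma_\delta<\sqrt{2\ln(1/\delta)}$ for the leading factor and then control the correction $1+\tfrac{1}{2\bar\gamma_\delta^2}$. You bound the correction via the tabulated value $\bar\gamma_{0.05}\approx1.645$, the paper via $\ln(1/\delta)>\ln 20$; either route lands comfortably below $1.7$.

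The necessary half, however, cannot be completed as you plan, and not because the Mills lower bound is merely slack near the endpoint: the target inequality $\bar\gamma_\delta\ge\sqrt{\ln(1/\delta)}$ is in fact \emph{false} on part of the stated range $\delta<0.05$. For instance at $\delta=0.04$ one has $\bar\gamma_{0.04}\approx1.751$ while $\sqrt{\ln 25}\approx1.794$; the crossover occurs near $\delta\approx0.031$. Consequently neither a sharper Mills-ratio estimate nor a direct numerical check at the boundary can rescue the argument, because the statement itself fails there. The paper's own proof runs through an auxiliary lemma asserting $\sqrt{\ln(1/\delta)}<\bar\gamma_\delta$ for $\delta<0.05$, established by showing that $f(x)=-\tfrac{x^2}{2}+\ln\sqrt{2\pi}+\ln x+\ln\tfrac{x^2}{x^2-1}$ satisfies $f(1.645)<0$; but a direct computation gives $f(1.645)\approx0.52>0$, and $f$ only becomes negative once $x\gtrsim1.95$, i.e.\ $\delta\lesssim0.026$. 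So the necessary half of the corollary, as stated for all $\delta<0.05$, does not hold; it is correct only on a smaller range (e.g.\ $\delta\le0.03$), and any valid proof must tighten the hypothesis accordingly.
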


\subsection{Noise magnitude to achieve $(\e, \delta)$-DP under settings (B)} % and (C)}

\iffalse
\begin{theorem}\label{thm:bound.settingB}
Under settings (B) and (C), a sufficient condition for \eqref{eq:DP-bound0} to hold is
\be\label{eq:DP-bound0B.suf}
\sigma \ge \max\left(\frac{1}{\sqrt[4]{p}}, \sqrt{\frac{\gamma_{\delta,np}}{n-p}} \right)\frac{\sqrt{2p}}{\sqrt{\e}},
\ee
where $\gamma_{\delta,np}$ is the upper $\delta$-quantile of the $\chi^2$ distribution with $np$ degrees of freedom and the noncentral parameter $\frac{\|\matX\|}{\sigma}$.
\end{theorem}
\fi

In contrast, we have the following results for achieving $(\e, \delta)$-DP with a random matrix masking.

\begin{theorem}\label{thm:bound.settingB_Ver3}
Under settings (B), a sufficient condition for the mechanism in \eqref{eq:modelB} to achieve $(\e, \delta)$-DP is
\be\label{eq:sigma.bound.new}
\ba{cl}
\sigma \ge \sigma_0, & \mbox{ where $\sigma_0$ is the largest root of  $g(\sigma) = 0$ for the function } \\
& g(\sigma) = \frac{(2 \sqrt{p} +1)  }{2 (n-p) } \gamma_{\delta; 2(n-p), \frac{p}{\sigma^2}}  + \sqrt{p} - \sigma^2 \e,
\ea
\ee
where $\gamma_{\delta; df, a}$ denotes the upper $\delta$-quantile of the non-central $\chi^2$ distribution with $df$ degrees of freedom and non-central parameter $a$.
\end{theorem}

We note here that $g(\sigma)$ is a continuous function with limits $\lim_{\sigma \to \infty} g(\sigma) = - \infty$ and $\lim_{\sigma \to 0} g(\sigma) = + \infty$. Hence there always exists a largest positive root $\sigma_0$, and $g(\sigma) \le 0$ for all $\sigma \ge \sigma_0$.

To understand the condition \eqref{eq:sigma.bound.new} better, we notice that $\gamma_{\delta; 2(n-p), \frac{p}{\sigma^2}} \approx 2(n-p)$ for large values of $n-p$. Using the bound for $\gamma_{\delta; 2(n-p), \frac{p}{\sigma^2}}$ in Lemma~\ref{lem:gamma_delta_Bound}, we obtain an explicit but much relaxed bound as the following.
\begin{cor}\label{cor:bound.settingB_Ver3}
Under setting (B) and when $\e<1$, a sufficient condition for the mechanism $Y(\matX)$ in \eqref{eq:modelB} to achieve $(\e, \delta)$-DP is
\be\label{eq:sigma.bound.new1}
\sigma \ge  \sqrt{\frac{2n-p+\ln(\frac{1}{\delta})}{2(n-p)}} \frac{3\sqrt[4]{p}}{\sqrt{\e}}.
\ee
\end{cor}
The proofs of Corollary~\ref{cor:bound.settingA} and Corollary~\ref{cor:bound.settingB_Ver3} is provided in Appendices~\ref{sec:proof.cor.A} and \ref{sec:proof.cor.B}.

We compare the sufficient condition of DP on additive noise variance in Corollary~\ref{cor:bound.settingB_Ver3} for setting (B), i.e., $\sigma^2 \ge \frac{2n-p+\ln(\frac{1}{\delta})}{2(n-p)} \frac{3 \sqrt{p}}{\e}$, with the sufficient condition in Corollary~\ref{cor:bound.settingA} for setting (A), $\sigma^2 \ge \frac{1.7 \ln(\frac{1}{\delta})}{\e^2}$, under $\delta < 0.05$ and $\e < 1$. Because the number $n$ of samples is typically much larger than the number $p$ of features in each sample in practice, we may simplify the lower bound for setting (B) as $\sigma^2 \ge (1+\frac{\ln(\frac{1}{\delta})}{2n}) \frac{3 \sqrt{p}}{\e}$.
First, consider the bounds in terms of $\e$ only, under any given configuration of other parameters. With matrix masking, the lower bound on the additive noise variance $\sigma^2$ improves by a factor $\e$, from $O(\frac{1}{\e^2})$ of setting (A) to $O(\frac{1}{\e})$ of setting (B). Second, consider the bounds in terms of $\delta$ and $\e$, under any given configuration of $n$ and $p$. The lower bound on noise variance improves from $O(\frac{ln(\frac{1}{\delta})}{\e^2})$ to $O(\frac{\sqrt{ln(\frac{1}{\delta})}}{\e})$.
%Third, in setting (A), the lower bound does not change with respect to $n$ and $p$. In setting (B), the lower bound increases for a larger $p$ and decreases for a larger $n$. It can be rewritten as $\delta^2 \ge (1+\frac{\ln(\frac{1}{\delta})}{2n}) \frac{3 \sqrt{p}}{\e}$.

%We compare the condition in Corollary~\ref{cor:bound.settingB_Ver3} with the condition in Corollary~\ref{cor:bound.settingA} for setting (A). First, we consider the bounds in terms of $\e$ only, under any given configuration of other parameters. The lower bound on the additive noise variance $\sigma^2$ improves by a factor $\e$, from $O(\frac{1}{\e^2})$ of setting (A) to $O(\frac{1}{\e})$ of setting (B). Next, we consider all parameters, the lower bound on the noise variance improves from $O(\frac{ln(\frac{1}{\delta})}{\e^2})$ in setting (A) to $O(\frac{ln(\frac{1}{\delta})}{\e^2} \frac{\e}{(n-p)})$ in setting (B). We note that $n$ is the number of data records and $p$ is the number of attributes in each record. When $\e \to 0$, $\delta \to 0$ and  $\sqrt{p}(n-p)=O(ln(1/\delta))$, the required additive noise variance to achieve $(\e, \delta)$-DP is reduced from $O(ln(1/\delta)/\e^2)$ in setting (A) by $\e/(n-p)$ in setting (B). Hence, the matrix masking in settings (B) provides a significant advantage under strong privacy requirements (i.e., small values of $\e$ and $\delta$), particularly in big data applications where $n$ is large.

\section{Discussion on the Implication of the Analysis Results}\label{sec:discuss}

To further understand how much advantage is provided by the matrix masking over the simple additive noise only, we compare the bounds \eqref{eq:sigma.bound.new} versus \eqref{eq:DP-boundA.suf} for various $\e$, $\delta$, $n$ and $p$ values in Table~\ref{bound.table}.

% latex table generated in R 4.1.2 by xtable 1.8-4 package
% Fri Nov 11 17:38:38 2022
\begin{table}[h]
\caption{Comparison of $\sigma$ bounds \eqref{eq:DP-boundA.suf} for setting (A) versus \eqref{eq:sigma.bound.new} for setting (B).}
\label{bound.table}
\centering
\begin{tabular}{@{}llrrrrrr@{}}
  \hline
  $\e$ & $\delta$ & $p$ & $n$
 & \multicolumn{1}{>{\arraybackslash}m{1.9cm}}{ Setting (A) \newline necessary \eqref{eq:DP-boundA.nec} }
 & \multicolumn{1}{>{\arraybackslash}m{1.9cm}}{ Setting (A) \newline sufficient \eqref{eq:DP-boundA.suf} }
 & \multicolumn{1}{>{\arraybackslash}m{1.9cm}}{ Setting (B) \newline sufficient \eqref{eq:sigma.bound.new}}
 & \multicolumn{1}{>{\arraybackslash}m{1.5cm}}{ Ratio of \newline \eqref{eq:DP-boundA.suf}/\eqref{eq:sigma.bound.new}}\\
  \hline
0.100 & 0.010 & 1 & 100 & 23.3 & 25.4 & 6.9 & 4 \\
      &       &   & 10000 & 23.3 & 25.4 & 6.4 & 4 \\
      &       & 5 & 100 & 23.3 & 25.4 & 9.5 & 3 \\
      &       &   & 10000 & 23.3 & 25.4 & 8.9 & 3 \\
      &       & 20 & 100 & 23.3 & 25.4 & 13.1 & 2 \\
      &       &   & 10000 & 23.3 & 25.4 & 12.1 & 2 \\
      & 0.001 & 1 & 100 & 30.9 & 32.5 & 7.1 & 5 \\
      &       &   & 10000 & 30.9 & 32.5 & 6.4 & 5 \\
      &       & 5 & 100 & 30.9 & 32.5 & 9.8 & 3 \\
      &       &   & 10000 & 30.9 & 32.5 & 8.9 & 4 \\
      &       & 20 & 100 & 30.9 & 32.5 & 13.5 & 2 \\
      &       &   & 10000 & 30.9 & 32.5 & 12.1 & 3 \\
0.010 & 0.010 & 1 & 100 & 232.6 & 254.1 & 21.8 & 12 \\
      &       &   & 10000 & 232.6 & 254.1 & 20.2 & 13 \\
      &       & 5 & 100 & 232.6 & 254.1 & 30.2 & 8 \\
      &       &   & 10000 & 232.6 & 254.1 & 28.0 & 9 \\
      &       & 20 & 100 & 232.6 & 254.1 & 41.5 & 6 \\
      &       &   & 10000 & 232.6 & 254.1 & 38.3 & 7 \\
      & 0.001 & 1 & 100 & 309.0 & 325.2 & 22.4 & 15 \\
      &       &   & 10000 & 309.0 & 325.2 & 20.2 & 16 \\
      &       & 5 & 100 & 309.0 & 325.2 & 31.0 & 10 \\
      &       &   & 10000 & 309.0 & 325.2 & 28.1 & 12 \\
      &       & 20 & 100 & 309.0 & 325.2 & 42.7 & 8 \\
      &       &   & 10000 & 309.0 & 325.2 & 38.4 & 8 \\
0.001 & 0.010 & 1 & 100 & 2326.3 & 2541.3 & 68.9 & 37 \\
      &       &   & 10000 & 2326.3 & 2541.3 & 63.8 & 40 \\
      &       & 5 & 100 & 2326.3 & 2541.3 & 95.4 & 27 \\
      &       &   & 10000 & 2326.3 & 2541.3 & 88.5 & 29 \\
      &       & 20 & 100 & 2326.3 & 2541.3 & 131.1 & 19 \\
      &       &   & 10000 & 2326.3 & 2541.3 & 121.0 & 21 \\
      & 0.001 & 1 & 100 & 3090.2 & 3252.0 & 70.8 & 46 \\
      &       &   & 10000 & 3090.2 & 3252.0 & 64.0 & 51 \\
      &       & 5 & 100 & 3090.2 & 3252.0 & 98.0 & 33 \\
      &       &   & 10000 & 3090.2 & 3252.0 & 88.8 & 37 \\
      &       & 20 & 100 & 3090.2 & 3252.0 & 134.9 & 24 \\
      &       &   & 10000 & 3090.2 & 3252.0 & 121.4 & 27 \\
   \hline
\end{tabular}
\end{table}

Observing Table~\ref{bound.table}, as expected, the advantage of applying the matrix masking grows as the values of $\e$ and $\delta$ decrease, or as the sample size $n$ increases.
As shown in the last column of the table, when $\e=0.1$, the required noise standard deviation $\sigma$ is reduced by two to five times; while when $\e=0.001$, the required $\sigma$ is reduced from twenty to fifty times. The reduction is also bigger for the smaller $\delta=0.001$ value versus $\delta=0.01$. Thus for applications with strict privacy requirements, the matrix masking drastically reduces the magnitude of additive noise needed, leading to much more accurate result at the same sample size for data analysis methods  that are supported by matrix masking (such as linear models) ~\citep{perturbation,AX,wu2017new}, where the pseudo data matrix after orthogonal transformation has the same sufficient statistics as the raw data matrix before the transformation.

We do note that the advantage of the matrix masking is reduced as the number of features $p$ in data set increases in the above table, since the bound \eqref{eq:sigma.bound.new} increases as $p$ increases (this can be observed clearly from \eqref{eq:sigma.bound.new1}) while \eqref{eq:DP-boundA.suf} is invariant to the changes in $p$. However, it is not clear whether the dependence of \eqref{eq:sigma.bound.new} on $p$ is intrinsic to the problem or is due to proof techniques we used. A future research topic is to study whether the bound \eqref{eq:sigma.bound.new} for the matrix masking settings can be further improved. In any case, the current bound still demonstrates the advantage of matrix masking in the case of big data $n >> p$. For example, when $p=20$, for the case of $\e=0.001$ and $\delta=0.001$, the matrix masking reduces the needed noise magnitude by around $27$ times when $n \ge 10000$.

For the matrix masking settings (B), as we have pointed out above,  \eqref{eq:sigma.bound.new} provides a lower bound on $\sigma$ which may be further improved in the future. In contrast, the bound \eqref{eq:DP-boundA.suf} for setting (A) without matrix masking cannot be improved much better since the necessary condition \eqref{eq:DP-boundA.nec} is very close. As shown in Table~\ref{bound.table} column 5 versus column 6, the necessary condition \eqref{eq:DP-boundA.nec} is less than $10\%$ away from the sufficient condition \eqref{eq:DP-boundA.suf} in all cases.

We have analyzed the situation of adding noise then apply matrix masking. Another way of combining the matrix masking and noise addition is as the following.

(C) Apply matrix masking first then add noise: release
\be\label{eq:modelC}
\matY =  \matA \matX + \matC \qquad \mbox{ where } \matC \sim NI_{n \times p}(0, \sigma^2).
\ee

For this setting (C), we can show that the density of $\matY$ conditional on $\matX$ is the same as the density under setting (B), as shown in Lemma~\ref{lem:DenC} of section~\ref{sec:DenC}. Thus the $\sigma$ needed to achieve $(\e, \delta)$-DP is the same in setting (B) and in setting (C). We focus on the setting (B) analysis only in this paper since the local noise addition combining with the triple-matrix masking scheme can provide end-to-end privacy protection in setting (B) with a much lower added noise magnitude than needed in setting (A). More accurate statistical analysis can be conducted on the pseudo data set with lower added noise.

Finally, when $n$ is big, we may implement the masking scheme with a random block-diagonal matrix to reduce the computational complexity. That is, we take
$$
\matA = \left(\ba{cccc} \matA_1 & \mat0   & \cdots & \mat0 \\
                        \mat0   & \matA_2 & \cdots & \mat0 \\
                        \vdots  & \vdots  & \ddots & \vdots \\
                        \mat0   & \mat0   & \cdots & \matA_k \\    \ea  \right)
$$
where $\matA_1,...,\matA_k$ are respectively orthogonal matrices of sizes $n_1, ..., n_k$ with $n=n_1+...+n_k$.  Let $n_s$ denotes the smallest size among $n_1, ..., n_k$. We assumes that $n_s > p$. Then the block-diagonal matrix mechanism still achieves $(\e, \delta)$-DP when $n$ is replaced by $n_s$ in condition \eqref{eq:sigma.bound.new}.

\section{Proofs of Main Theorems}\label{sec:proof}

\subsection{Proof of Lemma~\ref{lem:DenRatio}}
\begin{proof}[Proof of Lemma~\ref{lem:DenRatio}]

Under setting (A), since $\matC = \matY - \matX$, $p_{\matX}(\maty) = p_{\matC}(\maty - \matX)$ where the density $p_{\matC}$ is the multivariate Gaussian distribution density. That is,
$$
p_{\matX}(\maty) = (\frac{1}{\sqrt{2\pi}\sigma})^{np} e^{-\frac{\|\maty - \matX  \|^2}{2 \sigma^2}}.
$$
Notice that for a $n \times p$ matrix $\matX$, $\|\matX  \|^2= tr(\matX^T\matX) = tr(\matX \matX^T)$. Hence
$$\|\maty - \matX  \|^2 = tr[(\maty - \matX)(\maty - \matX)^T]= \|\maty\|^2 + \|\matX\|^2 - 2tr(\maty\matX^T). $$
Thus the density becomes
\be\label{eq:denA}
p_{\matX}(\maty)  = (\frac{1}{\sqrt{2\pi}\sigma})^{np} e^{-\frac{\|\maty\|^2 + \|\matX\|^2 - 2tr(\maty\matX^T)}{2 \sigma^2}} = (\frac{1}{\sqrt{2\pi}\sigma})^{np} e^{-\frac{\|\maty\|^2}{2 \sigma^2}} e^{-\frac{\|\matX  \|^2}{2 \sigma^2}} e^{\frac{tr(\maty\matX^T)}{ \sigma^2}}. %; \qquad p_{\matX'}(\maty) = (\frac{1}{\sqrt{2\pi}\sigma})^{np} e^{-\frac{\| \maty - \matX' \|^2}{2 \sigma^2}}
\ee
So the ratio of densities under $\matX$ and $\matX'$ is
\be
\frac {p_{\matX}(\maty)}{p_{\matX'}(\maty)}   =\frac {e^{-\frac{\|\matX  \|^2}{2 \sigma^2}} e^{\frac{tr(\maty\matX^T)}{ \sigma^2}}}{e^{-\frac{\|\matX'  \|^2}{2 \sigma^2}} e^{\frac{tr(\maty(\matX')^T)}{ \sigma^2}}}
\qquad = e^{\frac{\|\matX'\|^2 - \|\matX\|^2 }{2 \sigma^2}} \frac {e^{\frac{tr(\maty\matX^T)}{ \sigma^2}}}{e^{\frac{tr[\maty(\matX')^T] }{\sigma^2}}}.
\ee
This expression is indeed \eqref{eq:denRatioA0}.

Under setting (B) $\matY =  \matA (\matX + \matC)$, the joint density of $(\matY, \matA)$ is
$$p_{\matX}(\maty, \matA) = (\frac{1}{\sqrt{2\pi}\sigma})^{np} e^{-\frac{\|\matA^T \maty - \matX\|^2}{2 \sigma^2}} = (\frac{1}{\sqrt{2\pi}\sigma})^{np} e^{-\frac{\|\matA^T\maty\|^2}{2 \sigma^2}} e^{-\frac{\|\matX  \|^2}{2 \sigma^2}} e^{\frac{tr(\matA^T\maty\matX^T)}{ \sigma^2}}.$$
Since $\matA$ is an orthogonal matrix, $\|\matA^T\maty\| = \|\maty\|$.

Integrating the joint density $p_{\matX}(\maty, \matA)$ over the uniform distribution $\mu(\cdot)$ on the group $\mathcal{O}_{n \times n}$, we get the density  %Let $\mu(\cdot)$ denote the measure for the uniform distribution on the group $\mathcal{O}_{n \times n}$ consisting of $n \times n$ orthogonal matrices. We have
\be\label{eq:DenB}
\ba{cl}
p_{\matX}(\maty) & = \int_{\matA \in \mathcal{O}_{n \times n}} (\frac{1}{\sqrt{2\pi}\sigma})^{np} e^{-\frac{\|\maty\|^2}{2 \sigma^2}} e^{-\frac{\|\matX  \|^2}{2 \sigma^2}} e^{\frac{tr(\matA^T\maty\matX^T)}{ \sigma^2}} d \mu(\matA) \\
& = (\frac{1}{\sqrt{2\pi}\sigma})^{np} e^{-\frac{\|\maty\|^2}{2 \sigma^2}} e^{-\frac{\|\matX  \|^2}{2 \sigma^2}} \int_{\matA \in \mathcal{O}_{n \times n}}  e^{\frac{tr(\matA^T\maty\matX^T)}{ \sigma^2}} d \mu(\matA).
\ea
\ee
So the density ratio becomes \eqref{eq:denRatioB0}:
$$
\frac {p_{\matX}(\maty)}{p_{\matX'}(\maty)} = e^{\frac{\|\matX'\|^2 - \|\matX\|^2 }{2 \sigma^2}} \frac{\int_{\matA \in \mathcal{O}_{n \times n}} e^{\frac{ tr(\matA^T\maty\matX^T)}{ \sigma^2}} d \mu(\matA)}{\int_{\matA \in \mathcal{O}_{n \times n}} e^{\frac{tr[\matA^T\maty(\matX')^T]}{\sigma^2}} d \mu(\matA)}.
$$
\end{proof}

\subsection{Proof of Theorem~\ref{thm:bound.settingA} in setting (A) unmasked release}
\begin{proof}[Proof of Theorem~\ref{thm:bound.settingA}]
We derive the bounds by solving the set $\mathcal{S}_{\matX,\matX'}= \{\maty: p_{\matX}(\maty) > e^\e p_{\matX'}(\maty)\}$. Under setting (A), the density ratio is given by \eqref{eq:denRatioA0}
$$
\frac {p_{\matX}(\maty)}{p_{\matX'}(\maty)} = e^{\frac{\|\matX'\|^2 - \|\matX\|^2 }{2 \sigma^2}} \frac {e^{\frac{tr(\maty\matX^T)}{ \sigma^2}}}{e^{\frac{tr[\maty(\matX')^T] }{\sigma^2}}}.
$$

We want to further simplify \eqref{eq:denRatioA0}. Recall that we denote $\matDelta = \matX' - \matX$ so that $\matX' = \matX + \matDelta$. Similar to \eqref{eq:rows}, we further denote the $i$-th row of $\matDelta$, $\maty$ and $\matC$ respective as $\matDelta_i$, $\maty_i$ and $\matC_i$.
\iffalse
$$
\matX = \left(\ba{c} \matX_1 \\ \matX_2 \\ \vdots \\ \matX_n \ea \right), \qquad \matDelta = \left(\ba{c} \matDelta_1 \\ \matDelta_2 \\ \vdots \\ \matDelta_n \ea \right), \qquad \maty = \left(\ba{c} \maty_1 \\ \maty_2 \\ \vdots \\ \maty_n \ea \right), \qquad \matC = \left(\ba{c} \matC_1 \\ \matC_2 \\ \vdots \\ \matC_n \ea \right).
$$
\fi
Then $tr[\maty(\matX')^T]=tr(\maty\matX^T)+tr(\maty \matDelta^T ) = tr(\maty\matX^T)+\sum_{k=1}^{n}  \maty_k \matDelta_k^T$.
Notice that here $\|\matDelta_2\| = ... =\|\matDelta_n\| =0$ since $\matX$ and $\matX'$ only differ in the first row. Hence $tr[\maty(\matX')^T]=tr(\maty\matX^T)+ \maty_1 \matDelta_1^T $ and
$$
\|\matX'\|^2 - \|\matX\|^2 = \|\matX + \matDelta\|^2 - \|\matX\|^2 = \| \matDelta\|^2 + 2tr(\matX \matDelta^T) = \| \matDelta_1\|^2 + 2  \matX_1 \matDelta_1^T.
$$
Thus \eqref{eq:denRatioA0} is simplified to
\be\label{eq:denRatioA1}
\frac {p_{\matX}(\maty)}{p_{\matX'}(\maty)}  \ =e^{\frac{\|\matX'\|^2 - \|\matX\|^2 }{2 \sigma^2}} \frac {e^{\frac{tr(\maty\matX^T)}{ \sigma^2}}}{e^{\frac{tr[\maty(\matX')^T] }{\sigma^2}}} \ = e^{\frac{\| \matDelta_1\|^2 + 2 \matX_1 \matDelta_1^T}{2 \sigma^2}} e^{\frac{ -  \maty_1 \matDelta_1^T }{\sigma^2}} \ \ = e^{\frac{\| \matDelta_1\|^2 - 2 (\maty_1-\matX_1) \matDelta_1^T}{2 \sigma^2}}.
\ee

Hence the density ratio exceeding $e^\e$ is equivalent to
$$
\| \matDelta_1\|^2   - 2 (\maty_1-\matX_1)\matDelta_1^T  > 2\sigma^2 \e \qquad \Leftrightarrow \qquad - 2 (\maty_1-\matX_1)\matDelta_1^T > 2\sigma^2 \e - \| \matDelta_1\|^2.
$$
Hence $\mathcal{S}_{\matX,\matX'} = \mathcal{S}_{\matX,\matX + \matDelta} = \{\maty: - (\maty_1-\matX_1)\matDelta_1^T >  \sigma^2\e - \frac{\| \matDelta_1\|^2}{2} \}$.
Notice that $(\matY_1-\matX_1) = \matC_1$ follows a $p$-dimensional Gaussian distribution with zero mean and variance $\sigma^2 I_p$ where $I_p$ denotes the $p \times p$ identity matrix. Hence $-(\matY_1-\matX_1)\matDelta_1^T$ follows the Gaussian distribution with zero mean and variance $\sigma^2 \| \matDelta_1\|^2$. Thus $-\frac{(\matY_1-\matX_1)\matDelta_1^T}{\sigma \| \matDelta_1\|}$ follows the standard Gaussian distribution $N(0,1)$ whose upper $\delta$-quantile is $\bar \gamma_{\delta}$.
Hence condition~\eqref{eq:DP-bound0} is equivalent to
\be\label{eq:DP-bound0A}
\ba{ccl}
 P_\matX[Y \in \mathcal{S}_{\matX,\matX + \matDelta}] \le \delta \qquad &
\Leftrightarrow &  P_\matX[- (\matY_1-\matX_1)\matDelta_1^T >  \sigma^2\e - \frac{\| \matDelta_1 \|^2}{2}] \le \delta \\
& \Leftrightarrow & P_\matX[ -\frac{(\matY_1-\matX_1)\matDelta_1^T}{\sigma \| \matDelta_1\|} >  \frac{\sigma\e}{\| \matDelta_1\|} - \frac{\| \matDelta_1\|}{2\sigma}] \le \delta \\
& \Leftrightarrow &  \bar \gamma_{\delta} \ \ \le  \frac{\sigma\e}{\| \matDelta_1\|} - \frac{\| \matDelta_1\|}{2\sigma} \qquad = \frac{\sigma\e}{\| \matDelta\|} - \frac{\| \matDelta\|}{2\sigma}.
\ea
\ee
Here the last line comes from the fact that $ P_\matX[ -\frac{(\matY_1-\matX_1)\matDelta_1^T}{\sigma \| \matDelta_1\|} > \bar \gamma_{\delta}]= \delta$.

Since $\bar \gamma_{\delta} \le  \frac{\sigma\e}{\| \matDelta\|} - \frac{\| \matDelta\|}{2\sigma} \le \frac{\sigma\e}{\| \matDelta\|}$ always, a necessary condition for \eqref{eq:DP-bound0A} to hold is
$$
\bar \gamma_{\delta} \le  \frac{\sigma\e}{\| \matDelta \|} \qquad \Leftrightarrow \qquad \sigma \ge \frac{\| \matDelta\| \bar \gamma_{\delta}}{\e}.
$$
Thus we finished the proof for \eqref{eq:DP-bound0A.nec}.

Now $\sigma \ge \frac{\| \matDelta\| \bar \gamma_{\delta}}{\e}$ is necessary for \eqref{eq:DP-bound0} to hold for a given pair of neighboring $\matX$ and $\matX'$ that differs by $\|\matDelta\| \le 1$. Since there exists pairs of $\matX$ and $\matX'$ with $\|\matDelta\| = 1$, for \eqref{eq:DP-bound0} to hold for {\it every} pair of neighboring $\matX$ and $\matX'$, a necessary condition is that
$$
\sigma \ge \frac{  \bar \gamma_{\delta}}{\e},
$$
i.e., condition \eqref{eq:DP-boundA.nec} holds.

What remains to be proven is the sufficient condition. Particularly, we want to prove that \eqref{eq:DP-bound0A.suf} $\sigma \ge \frac{\| \matDelta\| \bar \gamma_{\delta}}{\e}(1+\frac{1}{2 \bar \gamma_{\delta}^2})$ is sufficient for \eqref{eq:DP-bound0A}. Since $\frac{\sigma\e}{\| \matDelta\|} - \frac{\| \matDelta\|}{2\sigma}$ is an increasing function of $\sigma$, when $\sigma \ge \frac{ \| \matDelta\| \bar \gamma_{\delta}}{\e}(1+\frac{1}{2 \bar \gamma_{\delta}^2})$, we have
$$
\ba{cl}
\frac{\sigma\e}{\| \matDelta\|} - \frac{\| \matDelta\|}{2\sigma} & \ge \frac{\e}{\| \matDelta\|} \frac{ \| \matDelta\| \bar \gamma_{\delta}}{\e}(1+\frac{1}{2 \bar \gamma_{\delta}^2}) - \frac{\| \matDelta\|}{2\frac{ \| \matDelta\| \bar \gamma_{\delta}}{\e}(1+\frac{1}{2 \bar \gamma_{\delta}^2})}
\\
& = \bar \gamma_{\delta} (1+\frac{1}{2 \bar \gamma_{\delta}^2}) - \frac{\e}{2\bar \gamma_{\delta}(1+\frac{1}{2 \bar \gamma_{\delta}^2})} \\
& = \bar \gamma_{\delta} +\frac{1}{2 \bar \gamma_{\delta}} - \frac{1}{2\bar \gamma_{\delta} } \frac{\e}{1+\frac{1}{2 \bar \gamma_{\delta}^2}} \\
& = \bar \gamma_{\delta} +\frac{1}{2 \bar \gamma_{\delta}} (1 - \frac{\e}{1+\frac{1}{2 \bar \gamma_{\delta}^2}}) \\
& \ge \bar \gamma_{\delta},
 \ea
$$
where the last line is due to $\e \le 1$. This is \eqref{eq:DP-bound0A}.
Thus \eqref{eq:DP-bound0A.suf} is sufficient for \eqref{eq:DP-bound0A} which then implies \eqref{eq:DP-bound0}.

When \eqref{eq:DP-boundA.suf} $\sigma^2 \ge \frac{ \bar \gamma_{\delta}^2}{\e^2}(1+\frac{1}{2 \bar \gamma_{\delta}^2})^2$ holds and $\|\matDelta\| \le 1$, then \eqref{eq:DP-bound0A.suf} $\sigma \ge \frac{\| \matDelta\| \bar \gamma_{\delta}}{\e}(1+\frac{1}{2 \bar \gamma_{\delta}^2})$ holds also. Thus \eqref{eq:DP-boundA.suf} is sufficient for \eqref{eq:DP-bound0} to hold for every pair of neighboring $\matX$ and $\matX'$ with $\|\matDelta\| \le 1$.

\end{proof}

\subsection{Proof of Theorem~\ref{thm:bound.settingB_Ver3} in setting (B)}
\begin{proof}[Proof of Theorem~\ref{thm:bound.settingB_Ver3}]

The main idea of this proof involves finding a high probability set $\mathcal{W}$ on the joint space of $(\matY, \matA)$ satisfying the following property: For any set $\mathcal{S}$ on the space of $\matY$,
\be\label{eq:bound.prob.SW}
P_{\matX} [\matY \in \mathcal{S} \mbox{ and } (\matY, \matA) \in \mathcal{W}] \le e^\e P_{\matX'} [\matY \in \mathcal{S} \mbox{ and } (\matY, \matA) \in \mathcal{W}].
\ee
Then assuming that the set $\mathcal{W}$ has a high probability occurring when raw data set is $\matX$,
\be\label{eq:W.probHigh}
P_{\matX} [(\matY, \matA) \in \mathcal{W}] \ge 1 - \delta,
\ee
we can show that $\matY$ in setting (B) achieves $(\e, \delta)$-DP because
$$
\ba{cl}
P_{\matX} [\matY \in \mathcal{S}] & = P_{\matX} [\matY \in \mathcal{S} \mbox{ and } (\matY, \matA) \in \mathcal{W}] + P_{\matX} [\matY \in \mathcal{S} \mbox{ and } (\matY, \matA) \notin \mathcal{W}] \\
\mbox{by \eqref{eq:bound.prob.SW} } & \le e^\e P_{\matX'} [\matY \in \mathcal{S} \mbox{ and } (\matY, \matA) \in \mathcal{W}] + P_{\matX} [\matY \in \mathcal{S} \mbox{ and } (\matY, \matA) \notin \mathcal{W}] \\
& \le e^\e P_{\matX'} [\matY \in \mathcal{S} ] + P_{\matX} [(\matY, \matA) \notin \mathcal{W}] \\
\mbox{by \eqref{eq:W.probHigh} } & \le e^\e P_{\matX'} [\matY \in \mathcal{S}] + \delta.
\ea
$$

Hence the proof is complete when we specify a set $\mathcal{W}$ that satisfies \eqref{eq:bound.prob.SW}, and also show that condition \eqref{eq:sigma.bound.new} ensures \eqref{eq:W.probHigh} holds. To describe this set clearly, we first conduct a coordinate system change to focus our attention to a $p-$dimensional subspace where the common part of raw data sets $X$ and $X'$ concentrates on. The integral over this subspace makes the ratio between densities for $Y$ under $\matX$ and $\matX'$ becomes bounded by $e^\e$ much easier than in setting (A).

Since $\matX$ and $\matX'$ only differ in the first row, their last $n-1$ rows are the same and belongs to a $p-$dimnesional linear subspace. For a more convenient presentation of this subspace, we apply a QR decomposition on the submatrix consisting of all the common $n-1$ rows in $\matX$ and $\matX'$, so that
$$
\matX  = \left( \ba{cc} 1 & \mat0 \\ \mat0 & \matQ \ea \right)  \left( \ba{c} \matX_1 \\ \matR \ea \right) = \overline{\matQ} \ \overline{\matX}, \qquad \matX'  = \left( \ba{cc} 1 & \mat0 \\ \mat0 & \matQ \ea \right)  \left( \ba{c} \matX_1' \\ \matR \ea \right) = \overline{\matQ} \ \overline{\matX}',
$$
where $\matQ$ is a $(n-1) \times (n-1)$ orthogonal matrix and $\matR$ is an $(n-1) \times p$ upper triangular matrix. Hence $\overline{\matQ}$ is a $n \times n$ orthogonal matrix.
$\overline{\matX}$ and $\overline{\matX}'$ only differ in the first row, which correspondingly are the first rows of $\matX$ and $\matX'$ respectively: $\overline{\matX}_1 = \matX_1$ and $\overline{\matX}_1' = \matX_1'$. The rest rows of $\overline{\matX}$ and $\overline{\matX}'$ are common for both matrices: only the $2$nd to $(p+1)$-th rows $\overline{\matX}_2, ..., \overline{\matX}_{p+1}$ are non-zero, and last $n-p-1$ rows are all zero vectors. That is, after the linear transformation $\overline{\matQ}$, the common part of $\overline{\matX}$ and $\overline{\matX}'$ lie on the $p-$dimensional subspace spanned by $\overline{\matX}_2, ..., \overline{\matX}_{p+1}$.

We apply the linear transformation $\overline{\matQ}$ on $\matA$, that is, we denote $\matB = (\matA \overline{\matQ})^T$ and consider the problem in terms of $\matB$. We further decompose the matrices $\overline{\matX}$, $\overline{\matX}'$ and $\matB$ as
$$
\overline{\matX} = \left( \ba{c} \matX_1 \\ \overline{\matX}_+ \\ \overline{\matX}_0 \ea \right), \qquad \overline{\matX}' = \left( \ba{c} \matX_1' \\ \overline{\matX}_+ \\ \overline{\matX}_0 \ea \right), \qquad {\matB} = \left( \ba{c} \matB_1 \\ {\matB}_+ \\ {\matB}_0 \ea \right),
$$
where $\overline{\matX}_+$ is the $p \times n$ matrix consisting the non-zero rows $\overline{\matX}_2, ..., \overline{\matX}_{p+1}$ of the raw data set and ${\matB}_+$ consists of the corresponding rows from the masking matrix $\matB$;  $\overline{\matX}_0$ is the $(n-p-1) \times n$ zero matrix and ${\matB}_0$ consists of the corresponding rows from the masking matrix $\matB$.

Since the density of uniform distribution on $\mathcal{O}_{n \times n}$, the group of orthogonal ${n \times n}$ matrices, is invariant to multiplication of an orthogonal matrix $\overline{\matQ}$, $\matB$ also follows the uniform distribution on $\mathcal{O}_{n \times n}$. We further denote
$$
\qquad {\matB}_+^\bot = \left( \ba{c} \matB_1 \\ {\matB}_0 \ea \right).
$$
Let $\mathcal{O}_{p \times n}$ denote the set of ${p \times n}$ matrices consisting of $p$ orthonormal row vectors, and $\mathcal{O}_{(n-p) \times n} (\bot {\matB}_+)$ denote the set of ${(n-p) \times n}$ matrices consisting of $n-p$ orthonormal row vectors which are also orthogonal to rows of ${\matB}_+$.
Since $\matB$ follows the uniform distribution on $\mathcal{O}_{n \times n}$, we have the following two facts:
(1) $\matB_+$ follows the uniform distribution on $\mathcal{O}_{p \times n}$; (2) given a $\matB_+$, ${\matB}_+^\bot$ follows the uniform distribution on $\mathcal{O}_{(n-p) \times n} (\bot {\matB}_+)$. In the following, we always use $\mu$ to denote the measure for uniform distribution on each corresponding group: either $\mathcal{O}_{n \times n}$ or $\mathcal{O}_{p \times n}$ or $\mathcal{O}_{(n-p) \times n} (\bot {\matB}_+)$.

To choose a set $\mathcal{W}$ that satisfies \eqref{eq:bound.prob.SW}, we observe how the densities $p_{\matX}(\maty)$ and $p_{\matX'}(\maty)$ differ. Particularly we want to concentrate on the differences in ${\matB}_+^\bot$.
Recall the density of masked data, when the raw data is $\matX$, is given by
$$
p_\matX(\maty) = \int_{\matA \in \mathcal{O}_{n \times n}} (\frac{1}{\sqrt{2\pi}\sigma})^{np}e^{-\frac{\|\maty\|^2 + \|\matX\|^2 - 2tr(\matA^T\maty {\matX}^T)}{2 \sigma^2}} d \mu(\matA).
$$
Since $\matB^T \overline{\matX} = (\matA \overline{\matQ}) (\overline{\matQ}^T \matX) = \matA \matX$ we have
$$tr(\matA^T \maty\matX^T) = tr[\maty(\matA \matX)^T] = tr[\maty(\matB^T \overline{\matX})^T] = tr(\maty \overline{\matX}^T \matB) = tr(\matB \maty\overline{\matX}^T)=\sum_{i=1}^{p+1}  \matB_i \maty \overline{\matX}_i^T.$$
Here the last equality is due to the fact that the last $n-p-1$ rows of $\overline{\matX}$ are all zero vectors, thus the trace only involves the sum for the first $p+1$ rows. Using this expression, the density becomes
\be\label{eq:den.X}
\ba{cl}
& p_\matX(\maty) \\
=& \int_{\matB \in \mathcal{O}_{n \times n}} (\frac{1}{\sqrt{2\pi}\sigma})^{np}e^{-\frac{\|\maty\|^2 + \|\matX\|^2 - 2tr(\matB \maty\overline{\matX}^T)}{2 \sigma^2}} d \mu(\matB) \\
=& \int_{\matB \in \mathcal{O}_{n \times n}} (\frac{1}{\sqrt{2\pi}\sigma})^{np}e^{-\frac{\|\maty\|^2 + \|\matX\|^2}{2 \sigma^2}} e^{\frac{\matB_1 \maty \matX_1^T +  \sum_{i=2}^{p+1}  \matB_i \maty \overline{\matX}_i^T}{ \sigma^2}} d \mu(\matB) \\
=& \int_{\matB_+ \in \mathcal{O}_{p \times n}} (\frac{1}{\sqrt{2\pi}\sigma})^{np}e^{-\frac{\|\maty\|^2 + \|\matX\|^2}{2 \sigma^2}} e^{\frac{ \sum_{i=2}^{p+1}  \matB_i \maty \overline{\matX}_i^T}{ \sigma^2}} [\int_{\matB_+^\bot \in \mathcal{O}_{(n-p) \times n} (\bot {\matB}_+)} e^{\frac{ \matB_1 \maty \matX_1^T }{\sigma^2}}  d \mu(\matB_+^\bot)] d \mu(\matB_+) \\
=& \int_{\matB_+ \in \mathcal{O}_{p \times n}} f(y,\overline{\matX}_+,{\matB}_+)
e^{-\frac{\|\matX\|^2}{2 \sigma^2}} [\int_{\matB_+^\bot \in \mathcal{O}_{(n-p) \times n} (\bot {\matB}_+)} e^{\frac{ \matB_1 \maty \matX_1^T }{\sigma^2}}  d \mu(\matB_+^\bot)] d \mu(\matB_+).
\ea
\ee
where we denote $f(y,\overline{\matX}_+,{\matB}_+)= (\frac{1}{\sqrt{2\pi}\sigma})^{np} e^{-\frac{\|\maty\|^2 }{2 \sigma^2}} e^{\frac{\sum_{i=2}^{p+1}  \matB_i \maty \overline{\matX}_i^T}{\sigma^2}}$.

The density $p_{\matX'}(\maty)$, when the raw data is $\matX'$, has a similar expression with $\matX$ and $\matX_1$ replaced by respectively $\matX'$ and $\matX_1'$. The ratio of these two densities is therefore
\be\label{eq:denRatio.B}
\ba{cl}
& \frac {p_{Y(\matX)}(\maty)}{p_{Y(\matX')}(\maty)} \\
= & e^{\frac{\|\matX'\|^2 - \|\matX\|^2 }{2 \sigma^2}} \frac{\int_{\matB_+ \in \mathcal{O}_{p \times n}} f(y,\overline{\matX}_+,{\matB}_+) [\int_{\matB_+^\bot \in \mathcal{O}_{(n-p) \times n} (\bot {\matB}_+)} e^{\frac{ \matB_1 \maty \matX_1^T }{\sigma^2}}  d \mu(\matB_+^\bot)] d \mu(\matB_+)}{\int_{\matB_+ \in \mathcal{O}_{p \times n}} f(y,\overline{\matX}_+,{\matB}_+) [\int_{\matB_+^\bot \in \mathcal{O}_{(n-p) \times n} (\bot {\matB}_+)} e^{\frac{ \matB_1 \maty (\matX_1')^T  }{\sigma^2}}  d \mu(\matB_+^\bot)] d \mu(\matB_+)}.
\ea
\ee

To further simplify the expression of the inner integral, we will use the following Lemma (whose proof is in subsection~\ref{sec:proof.unif}). In the following, when not specified, the vectors such as $\matb$ and $\matv$ are $n$-dimensional row vectors.
\begin{lemma}\label{lem:int_unif}
  Let $\mathfrak{S}$ denote a $q$-dimensional linear subspace of the $n$-dimensional Euclidean space $\mathfrak{R}^n$. Let $\bm{proj}_\mathfrak{S} (\matv)$ denotes the projection of a vector $\matv$ onto the subspace $\mathfrak{S}$.
  Let $\mu_q(\cdot)$ denote the measure for the uniform distribution over the $(q-1)$-dimensional unit sphere (the surface of the $q$-dimensional unit ball) within the subspace $\mathfrak{S}$.
  Then for any function $g(\cdot)$,
  $$
  \int g(\matb \matv^T) d \mu_q(\matb) = \int_{u=-1}^{1} g(\|\bm{proj}_\mathfrak{S} (\matv) \| u) \frac{1}{\bar c_q} (1-u^2)^\frac{q-3}{2} du
  $$
where $\bar c_q = \int_{u=-1}^{1} (1-u^2)^\frac{q-3}{2} du = \frac{\Gamma(\frac{1}{2})\Gamma(\frac{q-1}{2})}{\Gamma(\frac{q}{2})}$. Here $\Gamma(\cdot)$ is the Gamma function.
\end{lemma}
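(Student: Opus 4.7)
My plan is to reduce the left-hand integral to a one-dimensional integral in $u = \matb \cdot \hat{\bm{w}}$, where $\hat{\bm{w}}$ is the unit vector along the projection of $\matv$ onto $\mathfrak{S}$. Since $\matb \in \mathfrak{S}$, the inner product $\matb \matv^T$ depends only on the $\mathfrak{S}$-component of $\matv$, so $\matb \matv^T = \matb \cdot \bm{proj}_\mathfrak{S}(\matv)$. Write $\bm{w} = \bm{proj}_\mathfrak{S}(\matv)$ and $r = \|\bm{w}\|$. If $r = 0$ then both sides reduce to $g(0)$, because the density $\bar c_q^{-1}(1-u^2)^{(q-3)/2}$ integrates to one on $[-1,1]$; so I assume $r > 0$ and set $\hat{\bm{w}} = \bm{w}/r$.

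Pick any orthonormal basis of $\mathfrak{S}$ whose first vector is $\hat{\bm{w}}$. By rotational invariance of the uniform measure on the unit sphere in $\mathfrak{S}$, the coordinates of $\matb$ in this basis form a uniform random point on the standard sphere $S^{q-1}$, and $\matb \matv^T = r(\matb \cdot \hat{\bm{w}}) = r u$ where $u$ is the first coordinate of $\matb$. The problem thus reduces to computing the marginal density of the first coordinate of a uniform point on $S^{q-1}$.

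For that, parameterize $S^{q-1}$ by the polar angle $\theta \in [0,\pi]$ between $\matb$ and $\hat{\bm{w}}$ together with a direction on the orthogonal $(q-2)$-sphere. The surface-area element factors as $\sin^{q-2}(\theta)\, d\theta\, d\Omega_{q-2}$. Integrating out the angular direction and substituting $u = \cos\theta$ (so $du = -\sin\theta\, d\theta$ and $\sin\theta = (1-u^2)^{1/2}$) produces a marginal density proportional to $(1-u^2)^{(q-3)/2}$ on $[-1,1]$. The normalizing constant is the Beta integral
\begin{equation*}
\int_{-1}^{1}(1-u^2)^{(q-3)/2}\,du \;=\; B\!\left(\tfrac{1}{2},\tfrac{q-1}{2}\right) \;=\; \frac{\Gamma(\tfrac{1}{2})\Gamma(\tfrac{q-1}{2})}{\Gamma(\tfrac{q}{2})} \;=\; \bar c_q,
\end{equation*}
which, upon replacing $g(\matb \matv^T)$ by $g(r u)$ and using $r = \|\bm{proj}_\mathfrak{S}(\matv)\|$, yields the claimed identity.

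The only substantive step is the Jacobian calculation that produces the exponent $(q-3)/2$; it relies on the rotational invariance of $\mu_q$ within $\mathfrak{S}$, which allows me to fix $\hat{\bm{w}}$ and treat $\matb$ as a standard uniform sample on $S^{q-1}$ in the chosen basis. The edge cases $q = 1$ (where $S^0$ is discrete) and $q = 2$ (where the exponent becomes $-1/2$, giving an integrable but unbounded arcsine density) can be verified separately if needed, but do not arise in the intended application with $q = n-p$ large.
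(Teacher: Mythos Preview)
Your proposal is correct and follows essentially the same route as the paper: both arguments note that $\matb\matv^T$ depends only on $\bm{proj}_{\mathfrak{S}}(\matv)$, choose an orthonormal basis of $\mathfrak{S}$ with first vector along that projection, and reduce the integral to the marginal law of the first coordinate of a uniform point on $S^{q-1}$. The only difference is that the paper simply cites this marginal density as known, whereas you derive it via the polar-angle Jacobian and also handle the $r=0$ edge case explicitly.
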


Applying the Lemma~\ref{lem:int_unif} to \eqref{eq:denRatio.B}, we have
\be\label{eq:denRatio.B1}
\frac {p_{Y(\matX)}(\maty)}{p_{Y(\matX')}(\maty)} = e^{\frac{\|\matX'\|^2 - \|\matX\|^2 }{2 \sigma^2}} \frac{\int_{\matB_+ \in \mathcal{O}_{p \times n}} f(y,\overline{\matX}_+,{\matB}_+) G_{n-p}(\frac{\|v( \matX_1 \maty^T, {\matB}_+) \| }{ \sigma^2} ) d \mu(\matB_+)}{\tilde \int_{\matB_+ \in \mathcal{O}_{p \times n}} f(y,\overline{\matX}_+,{\matB}_+) G_{n-p}(\frac{\|v( \matX_1' \maty^T, {\matB}_+) \| }{ \sigma^2} ) d \mu(\matB_+)}.
\ee
where $v( \matb, {\matB}_+) = \bm{proj}_{\mathfrak{S} (\bot {\matB}_+)} ( \matb)$ denotes the projection vector of $\matb$ onto the $(n-p)$-dimensional linear subspace $\mathfrak{S} (\bot {\matB}_+)$ perpendicular to all rows in ${\matB}_+$ and
\be
G_q(t) = \int_{u=-1}^{1} e^{ t u} (1-u^2)^\frac{q-2}{2} du.
\ee

The density ratio in \eqref{eq:denRatio.B1} can be bounded using the following Lemma~\ref{lem:GRatio} on the ratio of function $G_q()$, whose proof is in subsection~\ref{sec:proof.GRatio}.
\begin{lemma}\label{lem:GRatio}
When $t > 0$, the derivative of $G_q(t)$ satisfies $0<G_q'(t)<\frac{t}{q}G_q(t)$. Thus $G_q(t)$ is an increasing function when $t \ge 0$, and
\be\label{eq:Gratio1}
\frac{G_q(t_2)}{G_q(t_1)} \le e^{\frac{|t_1^2-t_2^2|}{2q}} \mbox{\; for any $t_1 >0$ and $t_2 >0$.}
\ee
\end{lemma}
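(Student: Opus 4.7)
The plan is to use an integration-by-parts identity to rewrite $G_q'(t)$ in a form that makes both the positivity and the upper bound transparent. The key observation is that
$$u\,(1-u^2)^{(q-2)/2} \;=\; -\frac{1}{q}\,\frac{d}{du}(1-u^2)^{q/2}.$$
Substituting this into $G_q'(t) = \int_{-1}^{1} u\,e^{tu}(1-u^2)^{(q-2)/2}\,du$ and integrating by parts, the boundary terms vanish because $(1-u^2)^{q/2}$ is zero at $u=\pm 1$, and I obtain
$$G_q'(t) \;=\; \frac{t}{q}\int_{-1}^{1} e^{tu}(1-u^2)^{q/2}\,du.$$

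From this formula, positivity for $t>0$ is immediate since the integrand is strictly positive. For the upper bound, I factor $(1-u^2)^{q/2} = (1-u^2)\cdot(1-u^2)^{(q-2)/2}$ and use $1-u^2 \le 1$ on $[-1,1]$, with strict inequality almost everywhere, giving
$$G_q'(t) \;<\; \frac{t}{q}\int_{-1}^{1} e^{tu}(1-u^2)^{(q-2)/2}\,du \;=\; \frac{t}{q}\,G_q(t),$$
which is the second half of the claimed derivative bound. Since $G_q'(t)>0$, the function $G_q$ is strictly increasing on $t\ge 0$ (and by continuity at $t=0$, also nondecreasing across the boundary).

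To derive the exponential ratio bound, I divide through by $G_q(t)>0$ to get $(\ln G_q)'(t) < t/q$, and integrate. Assuming without loss of generality that $t_1 \le t_2$, integrating from $t_1$ to $t_2$ yields
$$\ln G_q(t_2) - \ln G_q(t_1) \;<\; \frac{t_2^2 - t_1^2}{2q} \;=\; \frac{|t_1^2 - t_2^2|}{2q},$$
so $G_q(t_2)/G_q(t_1) \le e^{|t_1^2-t_2^2|/(2q)}$. In the opposite case $t_2 < t_1$, monotonicity gives $G_q(t_2) \le G_q(t_1)$, so the ratio is at most $1 \le e^{|t_1^2-t_2^2|/(2q)}$, and the inequality still holds.

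The only step that is not entirely routine is spotting the identity $u(1-u^2)^{(q-2)/2} = -\tfrac{1}{q}\tfrac{d}{du}(1-u^2)^{q/2}$, which is what makes the integration by parts produce exactly the factor $t/q$ needed and raises the exponent of $(1-u^2)$ by one so that the crude bound $1-u^2\le 1$ closes the argument. Everything else is bookkeeping: checking vanishing boundary terms, handling the case $t_1 > t_2$ by symmetry, and verifying that $t=0$ is covered by continuity of $G_q$.
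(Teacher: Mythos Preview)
Your proof is correct and follows essentially the same approach as the paper: the same integration-by-parts identity $u(1-u^2)^{(q-2)/2}=-\tfrac{1}{q}\tfrac{d}{du}(1-u^2)^{q/2}$, the same comparison $(1-u^2)^{q/2}\le(1-u^2)^{(q-2)/2}$, and the same integration of $(\ln G_q)'$ with the $t_2<t_1$ case handled by monotonicity. The only cosmetic difference is that the paper first establishes $G_q'(t)>0$ via the symmetrization $\int_0^1[e^{tu}-e^{-tu}]u(1-u^2)^{(q-2)/2}\,du$ before doing the integration by parts, whereas you read off positivity directly from the post-integration-by-parts formula; your route is slightly more economical.
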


The bound \eqref{eq:Gratio1} together with \eqref{eq:denRatio.B1} indicate that we can choose the following set $\mathcal{W}$ to achieve \eqref{eq:bound.prob.SW}.
\be\label{eq:setW}
\mathcal{W}  =  \{ (\maty, \matB): \frac{| \ \|v( \matX_1 \maty^T, {\matB}_+) \|^2 - \|v( \matX_1' \maty^T, {\matB}_+) \|^2 \ |}{ \sigma^4} \le 2 (n-p) [\e + \frac{ \|\matX\|^2 - \|\matX'\|^2}{2 \sigma^2} ] \}.
\ee

For $(\maty, \matB) \in \mathcal{W}$, \eqref{eq:Gratio1} and \eqref{eq:setW} imply that
\be\label{eq:Gratio.n-p}
\ba{cl}
 G_{n-p}(\frac{\|v( \matX_1 \maty^T, {\matB}_+) \| }{ \sigma^2} )
& \le  G_{n-p}(\frac{\|v( \matX_1' \maty^T, {\matB}_+) \| }{ \sigma^2} ) e^{\frac{| \ \|v( \matX_1 \maty^T, {\matB}_+) \|^2 - \|v( \matX_1' \maty^T, {\matB}_+) \|^2}{2(n-p)}} \\
& \le  G_{n-p}(\frac{\|v( \matX_1' \maty^T, {\matB}_+) \| }{ \sigma^2} ) e^\e e^{\frac{ \|\matX\|^2 - \|\matX'\|^2}{2 \sigma^2}}.
\ea
\ee

Let $\mathcal{W}_\matB$ denotes the set $\{ \maty: (\maty, \matB) \in  \mathcal{W} \}$.
Using \eqref{eq:setW}, we note that the set $\mathcal{W}_\matB$ only depends on ${\matB}_+$ and does not depend on the other part of matrix $\matB$, thus we can further denote $\mathcal{W}_\matB$ as $\mathcal{W}_{{\matB}_+}$.  Therefore,
\be\label{eq:bound.prob.SW.B}
\ba{cl}
& P_{\matX} [\matY \in \mathcal{S} \mbox{ and } (\matY, \matB) \in \mathcal{W}] \\
= & \int_{\matB_+ \in \mathcal{O}_{p \times n}} \int_{\matB_+^\bot \in \mathcal{O}_{(n-p) \times n} (\bot {\matB}_+)} \int_{\maty \in \mathcal{S} \cap \mathcal{W}_\matB } f(y,\overline{\matX}_+,{\matB}_+)
e^{-\frac{\|\matX\|^2}{2 \sigma^2}} e^{\frac{ \matB_1 \maty \matX_1^T }{\sigma^2}}  d \maty  d \mu(\matB_+^\bot) d \mu(\matB_+) \\
= & e^{-\frac{\|\matX\|^2}{2 \sigma^2}} \int_{\matB_+ \in \mathcal{O}_{p \times n}}  \int_{\maty \in \mathcal{S} \cap \mathcal{W}_{{\matB}_+} } f(y,\overline{\matX}_+,{\matB}_+) \int_{\matB_+^\bot \in \mathcal{O}_{(n-p) \times n} (\bot {\matB}_+)}
 e^{\frac{ \matB_1 \maty \matX_1^T }{\sigma^2}}   d \mu(\matB_+^\bot) d \maty d \mu(\matB_+)
\\
= & e^{-\frac{\|\matX\|^2}{2 \sigma^2}} \int_{\matB_+ \in \mathcal{O}_{p \times n}}  \int_{\maty \in \mathcal{S} \cap \mathcal{W}_{{\matB}_+} } f(y,\overline{\matX}_+,{\matB}_+)  G_{n-p}(\frac{\|v( \matX_1 \maty^T, {\matB}_+) \| }{ \sigma^2} ) d \maty d \mu(\matB_+)
\\
\le & e^{-\frac{\|\matX\|^2}{2 \sigma^2}} \int_{\matB_+ \in \mathcal{O}_{p \times n}}  \int_{\maty \in \mathcal{S} \cap \mathcal{W}_{{\matB}_+} } f(y,\overline{\matX}_+,{\matB}_+)  G_{n-p}(\frac{\|v( \matX_1' \maty^T, {\matB}_+) \| }{ \sigma^2} ) e^\e e^{\frac{ \|\matX\|^2 - \|\matX'\|^2}{2 \sigma^2}} d \maty d \mu(\matB_+) \\
= & e^\e e^{-\frac{\|\matX'\|^2}{2 \sigma^2}} \int_{\matB_+ \in \mathcal{O}_{p \times n}}  \int_{\maty \in \mathcal{S} \cap \mathcal{W}_{{\matB}_+} } f(y,\overline{\matX}_+,{\matB}_+)  G_{n-p}(\frac{\|v( \matX_1' \maty^T, {\matB}_+) \| }{ \sigma^2} ) d \maty d \mu(\matB_+) \\
= & e^\e P_{\matX'} [\matY \in \mathcal{S} \mbox{ and } (\matY, \matB) \in \mathcal{W}].
\ea
\ee
where the inequality ``$\le$" comes from \eqref{eq:Gratio.n-p}.

Furthermore, the following lemma states that $\mathcal{W}$ is a high probability set, whose proof is provided next.
\begin{lemma}\label{lem:W.probHigh}
When the condition on the noise bound \eqref{eq:sigma.bound.new} holds,
\be\label{eq:W.probHigh.B}
P_{\matX} [(\matY, \matB) \in \mathcal{W}] \ge 1 - \delta.
\ee
\end{lemma}

Notice that \eqref{eq:bound.prob.SW.B} and \eqref{eq:W.probHigh.B} are in fact respectively \eqref{eq:bound.prob.SW} and \eqref{eq:W.probHigh} but stated in terms of $\matB$ instead of $\matA$. Thus the proof is completed.
\end{proof}

\begin{proof}[Proof of Lemma~\ref{lem:W.probHigh}]
The main idea of the proof is to show that  $| \|v( \matX_1' \maty^T, {\matB}_+) \|^2 - \|v( \matX_1 \maty^T, {\matB}_+) \|^2 |$ is bounded by a random variable following the non-central chi-square distribution, and thus \eqref{eq:sigma.bound.new} ensures that $\mathcal{W}$ described in \eqref{eq:setW} occurs with high probability.

Denote $T= \|v( \matX_1' \matY^T, {\matB}_+) \|^2 - \|v_1( \matX_1 \matY^T, {\matB}_+) \|^2$. We first show that $T$ can be expressed as a sum of quadratic forms. Since $\matB_1$, $\matB_{p+2}$,..., $\matB_{n}$ forms an orthogonal basis for the linear subspace $\mathfrak{S} (\bot {\matB}_+)$, the projection of $\matX_1 \maty^T$ onto this subspace is
\be\label{eq:V1.vec}
v( \matX_1 \matY^T, {\matB}_+) = (\matX_1 \matY^T \matB_1^T) \matB_1 + \sum_{i=p+2}^n (\matX_1 \matY^T \matB_i^T) \matB_i.
\ee
Denote $\matZ_1=\matY^T \matB_1^T$ and $\matZ_i=\matY^T \matB_{p+i}^T$ for $i=2,...,n-p$. Then \eqref{eq:V1.vec} implies that
$$
\|v( \matX_1 \matY^T, {\matB}_+) \|^2 = \|\matX_1 \matY^T \matB_1^T\|^2 + \sum_{i=p+2}^n \|\matX_1 \matY^T \matB_i^T\|^2 = \sum_{i=1}^{n-p} \|\matX_1 \matZ_i\|^2 = \sum_{i=1}^{n-p} \matZ_i^T \matX_1^T \matX_1 \matZ_i.
$$
Therefore
\be\label{eq:T.sum}
T = \sum_{i=1}^{n-p} \matZ_i^T (\matX_1')^T \matX_1' \matZ_i - \sum_{i=1}^{n-p} \matZ_i^T \matX_1^T \matX_1 \matZ_i = \sum_{i=1}^{n-p} \matZ_i^T \matM \matZ_i,
\ee
where $\matM = (\matX_1')^T \matX_1' - \matX_1^T \matX_1$. That is, $T$ is expressed as a sum of quadratic forms $\matZ_i^T \matM \matZ_i$.

Next, we show that each of the term $\matZ_i^T \matM \matZ_i$ is bounded by another quadratic form which follows the non-central chi-square distribution up to a constant multiplier.

Since $\matM = (\matX_1')^T \matX_1' - \matX_1^T \matX_1$ is symmetric and its rank is at most two, it has at most two non-zero singular values denoted as $\lambda_1$ and $\lambda_2$. Hence the singular value decomposition of $\matM$ is
$$
\matM = \matV^T \left( \ba{ccc} \lambda_1 & 0 & \mat0_{1 \times (p-2)} \\  0 & \lambda_2 & \mat0_{1 \times (p-2)} \\ \mat0_{(p-2) \times 1} & \mat0_{(p-2) \times 1} & \mat0_{(p-2) \times (p-2)} \ea \right) \matV
$$
where $\matV$ is an orthogonal matrix. Let $\lambda_{max} = \max(|\lambda_1|, |\lambda_2|)$. The property of the quadratic forms implies that
\be\label{eq:ZMZ.bound}
\ba{cl}
|\matZ_i^T \matM \matZ_i| & = | (\matV \matZ_i)^T \left( \ba{ccc} \lambda_1 & 0 & \mat0_{1 \times (p-2)} \\  0 & \lambda_2 & \mat0_{1 \times (p-2)} \\ \mat0_{(p-2) \times 1} & \mat0_{(p-2) \times 1} & \mat0_{(p-2) \times (p-2)} \ea \right) (\matV \matZ_i) | \\
& \le (\matV \matZ_i)^T \left( \ba{ccc} \lambda_{max} & 0 & \mat0_{1 \times (p-2)} \\  0 & \lambda_{max} & \mat0_{1 \times (p-2)} \\ \mat0_{(p-2) \times 1} & \mat0_{(p-2) \times 1} & \mat0_{(p-2) \times (p-2)} \ea \right) (\matV \matZ_i) \\
%& = \lambda_{max} \sigma^2 (\frac{\matV \matZ_i}{\sigma})^T \left( \ba{ccc} 1 & 0 & \mat0_{1 \times (p-2)} \\  0 & 1 & \mat0_{1 \times (p-2)} \\ \mat0_{(p-2) \times 1} & \mat0_{(p-2) \times 1} & \mat0_{(p-2) \times (p-2)} \ea \right) (\frac{\matV \matZ_i}{\sigma}) \\
& = \lambda_{max} \sigma^2 T_i,
\ea
\ee
for
\be\label{eq:Ti}
T_i = (\frac{\matV \matZ_i}{\sigma})^T \left( \ba{ccc} 1 & 0 & \mat0_{1 \times (p-2)} \\  0 & 1 & \mat0_{1 \times (p-2)} \\ \mat0_{(p-2) \times 1} & \mat0_{(p-2) \times 1} & \mat0_{(p-2) \times (p-2)} \ea \right) (\frac{\matV \matZ_i}{\sigma})
\ee
Then we show that the quadratic forms $T_i$'s are independent non-central chi-square random variables.

Notice that, given $\matB$, $\matV \matZ_i = \matV \matY^T \matB_i^T$ is a linear combination of the normal random variables in the matrix $\matY$. Hence $\matV \matZ_i$ is also a normal random vector. When $i \ne j$, since $\matB_i$ and $\matB_j$ are orthogonal, we have
$$
COV(\matB_{i} \matY, \matB_{j} \matY)= E[(\matY - E \matY )^T \matB_{i}^T, \matB_{j} (\matY- E \matY)]= \mat0_{p \times p}.
$$
Thus $\matZ_i$ is independent of $\matZ_j$ for $i \ne j$, because zero covariance implies independence between two normal random vectors. This then implies that $T_i$ is independent of $T_j$ for $i \ne j$.

Since $\| \matB_i \| =1$,  $\matB_i$ times a column of $\matY$ is a $1$-dimensional normal random variable with variance $\|\matB_i\|^2 \sigma^2= \sigma^2$. Also, columns of $\matY$ are independent of each other, thus the elements in the normal random vector $\matB_i \matY$ are independent of each other. Thus
$$
Var( \matB_i \matY)= \sigma^2 \matI_{p \times p}.
$$
That is, the variance matrix for $\matZ_i$ is $Var( \matZ_i)= \sigma^2 \matI_{p \times p}$. Therefore
$$
Var(\matV \matZ_i)= \matV Var( \matZ_i) \matV^T = \matV \sigma^2 \matI_{p \times p} \matV^T = \sigma^2 \matI_{p \times p},
$$
where the last equality is due to the fact that $\matV$ is an orthogonal matrix.

Therefore, $\frac{\matV \matZ_i}{\sigma}$ is a normal random vector with variance matrix $\matI_{p \times p}$. So its elements are independent random variable each with variance one. From \eqref{eq:Ti}, $T_i$ is the sum of squares of the first two elements in $\frac{\matV \matZ_i}{\sigma}$, hence $T_i$ follows a non-central chi-square distribution with non-central parameter $N_i$ equals to the expectation of the sum of squares of its first two elements. Thus $N_i \le \|E (\frac{\matV \matZ_i}{\sigma})\|^2$.

Now we have shown that $T_i$'s are independent non-central chi-square random variables with non-central parameters $N_i$ and each has degree of freedom $2$, hence $\tilde T = \sum_{i=1}^{n-p} |T_i| $ follows a non-central chi-square distribution with non-central parameter $\tilde N = \sum_{i=1}^{n-p} N_i$ and degree of freedom $\sum_{i=1}^{n-p} 2 = 2 (n-p)$. And we note that $|T| \le \lambda_{max} \sigma^2 \tilde T$ from \eqref{eq:T.sum} and \eqref{eq:Ti}. Since $N_i \le \|E (\frac{\matV \matZ_i}{\sigma})\|^2$, $\tilde N \le \sum_{i=1}^{n-p} \|E (\frac{\matV \matZ_i}{\sigma})\|^2$ which will be calculated in the following.

Since $E(\matY) = \matA \matX = \matB^T \overline{\matX} %, \qquad Var(\matY) = \sigma^2 \matI_{n \times n},
$, we have
$$
E(\matV \matZ_1) = \matV E(\matY^T) \matB_1^T = \matV \overline{\matX}^T \matB \matB_1^T = \matV \overline{\matX}_1^T = \matV \matX_1^T,
$$
and for $i=2,...,n-p$,
$$
E(\matV \matZ_i) = \matV E(\matY^T) \matB_{p+i}^T = \matV \overline{\matX}^T \matB \matB_{p+i}^T = \matV \overline{\matX}_{p+i}^T = \mat0_{p \times 1}.
$$
Therefore,
$$
\|E (\frac{\matV \matZ_i}{\sigma})\|^2 = \left\{
                                             \begin{array}{ll}
                                             \frac{\|\matV \matX_1^T\|^2}{\sigma^2} = \frac{\| \matX_1\|^2}{\sigma^2} , & i=1 \\
                                               0, & i=2, ..., n-p
                                             \end{array}
                                           \right.
$$
Thus we have $\tilde N \le \sum_{i=1}^{n-p} \|E (\frac{\matV \matZ_i}{\sigma})\|^2 = \frac{\| \matX_1\|^2}{\sigma^2}$.

From \eqref{eq:XboundedOne}, $\| \matX_1\|^2 \le p$. Hence we have shown that $|T| \le \lambda_{max} \sigma^2 \tilde T$ with $\tilde T$ follows a non-central chi-square distribution with non-central parameter $\tilde N \le \frac{\| \matX_1\|^2}{\sigma^2} \le \frac{p}{\sigma^2}$ and degree of freedom $2(n-p)$.

We are now ready to show that the condition \eqref{eq:sigma.bound.new} ensures the high probability of set $\mathcal{W}$ occurrence. Using \eqref{eq:sigma.bound.new}, we have
\be\label{eq:prob.bound.1}
\ba{ccl}
\frac{(2 \sqrt{p} +1)\chi_{2(n-p)}^2( \frac{p}{\sigma^2}; \delta )}{2 (n-p) }  + \sqrt{p}  \le \sigma^2 \e & \Rightarrow & \frac{(2 \sqrt{p} +1)\chi_{2(n-p)}^2( \tilde N; \delta )}{2 (n-p) }  + \sqrt{p} \le \sigma^2 \e \\
& \Leftrightarrow & \chi_{2(n-p)}^2( \tilde N; \delta ) \le \frac{2 (n-p)}{2 \sqrt{p} +1} \sigma^2 ( \e - \frac{\sqrt{p}}{\sigma^2}).
\ea
\ee

For the symmetric matrix $\matM = (\matX_1')^T \matX_1' - \matX_1^T \matX_1$, $\lambda_{max}$ is also its spectral norm $\|\matM \|_s$ which is bounded above by the  Frobenius norm $\|\matM \|_F$. That is,
$$
\ba{cl}
& \lambda_{max} = \|\matM \|_s  \\
\le & \|\matM \|_F = \sqrt{tr(\matM^T \matM)} = \sqrt{tr(\matM \matM)} \\
= & \sqrt{tr[(\matX_1')^T \matX_1'(\matX_1')^T \matX_1' - (\matX_1')^T \matX_1' \matX_1^T \matX_1 - \matX_1^T \matX_1 (\matX_1')^T \matX_1' + \matX_1^T \matX_1 \matX_1^T \matX_1]} \\
= & \sqrt{tr[\matX_1' (\matX_1')^T \matX_1'(\matX_1')^T]  - tr[\matX_1 (\matX_1')^T \matX_1' \matX_1^T]  - tr[\matX_1'\matX_1^T \matX_1 (\matX_1')^T]  + tr[\matX_1 \matX_1^T \matX_1 \matX_1^T ]} \\
= & \sqrt{[\matX_1' (\matX_1')^T]^2 - 2 [\matX_1' \matX_1^T]^2  + [\matX_1 \matX_1^T]^2},
\ea
$$
where the last equality involves no trace anymore since $\matX_1 \matX_1^T$, $\matX_1' \matX_1^T$ and $\matX_1' (\matX_1')^T$ are all scalars.

Because $\matX_1' = \matX_1 + \matDelta_1$, this implies that
$$
\ba{cl}
 & \lambda_{max}^2 \\
\le  & [\matX_1' (\matX_1 + \matDelta_1)^T]^2 - 2 (\matX_1' \matX_1^T)^2  + (\matX_1 \matX_1^T)^2 \\
= & (\matX_1' \matX_1^T + \matX_1' \matDelta_1^T)^2 - 2 (\matX_1' \matX_1^T)^2  - 2 (\matX_1' \matDelta_1^T)^2 + 2 (\matX_1' \matDelta_1^T)^2 + (\matX_1 \matX_1^T)^2 \\
= & - (\matX_1' \matX_1^T - \matX_1' \matDelta_1^T)^2   + 2 (\matX_1' \matDelta_1^T)^2 + (\matX_1 \matX_1^T)^2 \\
= & - [\matX_1' (\matX_1^T - \matDelta_1^T)]^2   + 2 [\matX_1' \matDelta_1^T]^2 + (\matX_1 \matX_1^T)^2 \\
= & - [(\matX_1 + \matDelta_1) (\matX_1^T - \matDelta_1^T)]^2   + 2 [(\matX_1 + \matDelta_1) \matDelta_1^T]^2 + (\matX_1 \matX_1^T)^2 \\
= & - (\matX_1 \matX_1^T - \matDelta_1 \matDelta_1^T)^2   + 2 (\matX_1 \matDelta_1^T + \matDelta_1\matDelta_1^T) ^2 + (\matX_1 \matX_1^T)^2 \\
= & - ( \|\matX_1\|^2 - \| \matDelta_1 \|^2)^2   + 2 (\matX_1 \matDelta_1^T + \| \matDelta_1 \|^2)^2 + \| \matX_1\|^4 \\
= & - \|\matX_1\|^4 + 2 \|\matX_1\|^2 \| \matDelta_1 \|^2 - \| \matDelta_1 \|^4   + 2 (\matX_1 \matDelta_1^T)^2 + 4(\matX_1 \matDelta_1^T)\| \matDelta_1 \|^2 + 2\| \matDelta_1 \|^4 + \| \matX_1\|^4 \\
= & 2 \|\matX_1\|^2 \| \matDelta_1 \|^2 + 2 (\matX_1 \matDelta_1^T)^2 + 4(\matX_1 \matDelta_1^T)\| \matDelta_1 \|^2 + \| \matDelta_1 \|^4 \\
\le & 2 \|\matX_1\|^2 \| \matDelta_1 \|^2 + 2 \|\matX_1\|^2 \| \matDelta_1 \|^2 + 4\|\matX_1\| \| \matDelta_1 \| \| \matDelta_1 \|^2 + \| \matDelta_1 \|^4 \\
\le & 4 \|\matX_1\|^2 + 4\|\matX_1\| + 1 \qquad \qquad \mbox{(since $\| \matDelta_1 \| \le 1$)} \\
= & (2\|\matX_1\| + 1)^2 \qquad \le  (2 \sqrt{p} +1)^2.
\ea
$$
Thus
$$
\lambda_{max} \le  2 \sqrt{p} +1.
$$
Thus \eqref{eq:prob.bound.1} implies that
$$
\chi_{2(n-p)}^2( \tilde N; \delta ) \le \frac{2 (n-p)}{\lambda_{max}} \sigma^2 ( \e - \frac{\sqrt{p}}{\sigma^2}).
$$
Since $P_\matX[\tilde T \le \chi_{2(n-p)}^2( \tilde N; \delta ) ]=1 - \delta$,
$$
P_\matX[\tilde T \le \frac{2 (n-p)}{\lambda_{max}} \sigma^2 ( \e - \frac{\sqrt{p}}{\sigma^2}) ]\ge 1 - \delta.
$$
Since $|T| \le \lambda_{max} \sigma^2 \tilde T$, we have
\be\label{eq:prob.bound.2}
P_\matX[\frac{| T|}{ \sigma^4} \le 2 (n-p) ( \e - \frac{\sqrt{p}}{\sigma^2}) ] \ge P_\matX[\frac{\lambda_{max} \tilde T}{ \sigma^2}  \le 2 (n-p) ( \e - \frac{\sqrt{p}}{\sigma^2}) ] \ge 1 - \delta.
\ee

Finally, since $|\|\matX_1\|^2 - \|\matX_1'\|^2| = (\|\matX_1\| + \|\matX_1'\|) \ | \ \|\matX_1\| - \|\matX_1'\| \ | \le (\sqrt{p}+\sqrt{p}) \|\matX_1 - \matX_1'\| = 2 \sqrt{p} \| \Delta_1 \| \le 2 \sqrt{p}$, $\|\matX_1\|^2 - \|\matX_1'\|^2 \ge - 2 \sqrt{p}$. Hence \eqref{eq:prob.bound.2} implies that
$$
P_\matX[\frac{| T|}{ \sigma^4} \le 2 (n-p) ( \e + \frac{\|\matX_1\|^2 - \|\matX_1'\|^2}{2\sigma^2}) ] \ge 1 - \delta.
$$
Using \eqref{eq:setW}, we can see that this is \eqref{eq:W.probHigh.B}.

\end{proof}

\begin{appendix}
\section{Proofs of technical results}\label{sec:appendix}

\subsection{Proof of Lemma~\ref{lem:BC<A}}\label{sec:proof.BC<A}

\begin{proof}[Proof of Lemma~\ref{lem:BC<A}]
Under setting (A),  $\matY$ satisfies $(\e, \delta)$-DP Definition~\ref{def:DiffPriv}. That is, for any set $\mathcal{S}$ and any pair of neighbors $\matX$ and $\matX'$,
$$
P_\matX[\matY \in \mathcal{S}] = P_\matX[(\matX + \matC) \in \mathcal{S}] \le  e^\e P_{\matX'}[\matY \in \mathcal{S}] + \delta = e^\e P_{\matX'}[(\matX' + \matC) \in \mathcal{S}] + \delta.
$$
Let $\nu_\sigma(\cdot)$ denote the multivariate Gaussian density for $\matC \sim NI_{n \times p}(0, \sigma^2)$, and $\mathbb{I}(\mathcal{E})$ denote the indicator variable that event $\mathcal{E}$ occurs. Then the above expression becomes
\be\label{eq:A.DP}
\int \mathbb{I}[(\matX + \matc) \in \mathcal{S}] d \nu_{\sigma_0}(\matc)  \le  e^\e \int \mathbb{I}[(\matX' + \matc) \in \mathcal{S}] d \nu_{\sigma_0}(\matc) + \delta,
\ee
for any set $\mathcal{S}$ and any pair of neighbors $\matX$ and $\matX'$.

Under setting (B), %let $\mu(\cdot)$ denote the density for the uniform distribution on the group of $n \times n$ orthogonal matrices. We also
denote $\matA \mathcal{S} =\{ \matA \matx: \matx \in \mathcal{S}\}$ as the set whose elements are the elements in $\mathcal{S}$ multiplied by the matrix $\matA$. Then
$$
\ba{cl}
P_\matX[\matY \in \mathcal{S}] & = \int \int \mathbb{I}[\matA(\matX + \matc) \in \mathcal{S}] d \nu_{\sigma_0}(\matc) d \mu(\matA) \\
& = \int \int \mathbb{I}[(\matX + \matc) \in \matA^{-1} \mathcal{S}] d \nu_{\sigma_0}(\matc) d \mu(\matA).\\
\ea
$$
When $\sigma = \sigma_0$, this together with \eqref{eq:A.DP} implies that for any set $\mathcal{S}$ and any pair of neighbors $\matX$ and $\matX'$,
$$
\ba{cl}
P_\matX[\matY \in \mathcal{S}] & \le \int \{e^\e \int \mathbb{I}[(\matX' + \matc) \in \matA^{-1} \mathcal{S}] d \nu_{\sigma_0}(\matc) + \delta\} d \mu(\matA) \\
& = e^\e \int \int \mathbb{I}[(\matX' + \matc) \in \matA^{-1} \mathcal{S}] d \nu_{\sigma_0}(\matc) d \mu(\matA) + \int \delta d \mu(\matA) \\
& = e^\e P[(\matX' + \matC) \in \mathcal{S}] + \delta \\
& = e^\e P_{\matX'}[\matY \in \mathcal{S}] + \delta.
\ea
$$
Thus the mechanism in setting (B) also satisfies $(\e, \delta)$-DP Definition~\ref{def:DiffPriv} for $\sigma = \sigma_0$.

\end{proof}

\subsection{Proof of Lemma~\ref{lem:DPdensity}}\label{sec:proof.DPdensity}

\begin{proof}[Proof of Lemma~\ref{lem:DPdensity}]
Let $\mathcal{A}^C$ denotes the compliment of a set $\mathcal{A}$.
When condition~\eqref{eq:DP-bound0} holds, for any set $\mathcal{S}$ and any pair of neighbors $\matX$ and $\matX'$,
$$
\ba{rl}
P_\matX[Y \in \mathcal{S}] = & P_\matX[Y \in \mathcal{S} \cap \mathcal{S}_{\matX,\matX'}^C] + P_\matX[Y \in \mathcal{S} \cap \mathcal{S}_{\matX,\matX'}] \\
\le & \int_{\mathcal{S} \cap \mathcal{S}_{\matX,\matX'}^C} p_{\matX}(\maty) d \maty + P_\matX[Y \in \mathcal{S}_{\matX,\matX'}] \\
\le & e^\e \int_{\mathcal{S} \cap \mathcal{S}_{\matX,\matX'}^C} p_{\matX'}(\maty) d\maty + \delta \\
\le & e^\e \int_{\mathcal{S}} p_{\matX'}(\maty) d\maty + \delta \\
= & e^\e P_{\matX'}[Y \in \mathcal{S}] + \delta.
\ea
$$
Thus the mechanism $Y$ satisfies $(\e, \delta)$-DP Definition~\ref{def:DiffPriv}.

When condition~\eqref{eq:DP-bound0} is violated, there exists a pair of neighboring $\matX$ and $\matX'$ such that $P_\matX[Y \in \mathcal{S}_{\matX,\matX'}] > \delta$. Take $\mathcal{S} = \mathcal{S}_{\matX,\matX'}$, then
$$
P_{\matX'}[Y \in \mathcal{S}] = \int_{\mathcal{S}} p_{\matX'}(\maty) d\maty \le \frac{1}{e^\e} \int_{\mathcal{S}} p_{\matX}(\maty) d\maty = \frac{1}{e^\e} P_{\matX}[Y \in \mathcal{S}].
$$
Hence,
$$
\ba{cl}
e^{\e'} P_{\matX'}[Y \in \mathcal{S}] + \delta' &=e^{\e'} P_{\matX'}[Y \in \mathcal{S}] + (1-\frac{e^{\e'}}{e^\e})\delta \\
 & \le  \frac{e^{\e'}}{e^\e} P_{\matX}[Y \in \mathcal{S}] + (1-\frac{e^{\e'}}{e^\e})\delta \\
&< \frac{e^{\e'}}{e^\e} P_{\matX}[Y \in \mathcal{S}] + (1-\frac{e^{\e'}}{e^\e})P_{\matX}[Y \in \mathcal{S}_{\matX,\matX'}] \\
&= \frac{e^{\e'}}{e^\e} P_{\matX}[Y \in \mathcal{S}] + (1-\frac{e^{\e'}}{e^\e})P_{\matX}[Y \in \mathcal{S}] \\
&= P_{\matX}[Y \in \mathcal{S}].
\ea
$$
Thus the mechanism is not $(\e', \delta')$-DP.
\end{proof}

\subsection{Proof of Corollary~\ref{cor:bound.settingA}}\label{sec:proof.cor.A}

\begin{proof}[Proof of Corollary~\ref{cor:bound.settingA}]
We will use an explicit bound \eqref{eq:GaussQuantBound} on the Gaussian quantile provided in the next Lemma~\ref{lem:GaussianQuantile}.

By Theorem~\ref{thm:bound.settingA}, if \eqref{eq:DP-bound0} holds for every pair of neighboring $\matX$ and $\matX'$, it is necessary that
$$
\sigma \ge \frac{\bar \gamma_{\delta}}{\e} > \frac{\sqrt{ln(\frac{1}{\delta})}}{\e},
$$
where the last inequality comes from \eqref{eq:GaussQuantBound} $\sqrt{\ln(\frac{1}{\delta})} < \bar \gamma_{\delta}$.

Furthermore, when $\delta<0.05$, $2 ln(\frac{1}{\delta})>2 ln(20)>5$, thus $(1+\frac{1}{2 \ln(\frac{1}{\delta})}) < 1.2$. Hence
$$
\sigma > \frac{1.7  \sqrt{ \ln(\frac{1}{\delta})}}{\e} > \frac{1.2  \sqrt{2 \ln(\frac{1}{\delta})}}{\e} \qquad \Rightarrow \qquad \sigma > \frac{ \sqrt{2 \ln(\frac{1}{\delta})}}{\e} (1+\frac{1}{2 \ln(\frac{1}{\delta})}) > \frac{\bar \gamma_{\delta}}{\e}(1+\frac{1}{2 \bar \gamma_{\delta}^2})),
$$
where the last inequality comes from \eqref{eq:GaussQuantBound} $\sqrt{\ln(\frac{1}{\delta})} < \bar \gamma_{\delta} < \sqrt{2 \ln(\frac{1}{\delta})}$. Hence by Theorem~\ref{thm:bound.settingA}, condition~\eqref{eq:DP-bound0} holds for every pair of neighboring $\matX$ and $\matX'$ (thus the release mechanism achieves $(\e, \delta)$-DP).

\end{proof}

\begin{lemma}\label{lem:GaussianQuantile}
When $\delta<0.05$,
\be\label{eq:GaussQuantBound}
\sqrt{\ln(\frac{1}{\delta})} < \bar \gamma_{\delta} < \sqrt{2 \ln(\frac{1}{\delta})}
\ee
\end{lemma}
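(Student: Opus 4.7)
The plan is to translate each inequality in~\eqref{eq:GaussQuantBound} into a one-sided Gaussian tail statement and verify it with classical Mills-ratio bounds. Since $\bar\gamma_\delta$ is defined by $P(Z>\bar\gamma_\delta)=\delta$ for $Z\sim N(0,1)$, and $t\mapsto P(Z>t)$ is strictly decreasing, $\bar\gamma_\delta<t$ is equivalent to $P(Z>t)<\delta$, while $\bar\gamma_\delta>t$ is equivalent to $P(Z>t)>\delta$. The core tool is the standard two-sided tail bound
\[
\frac{t}{1+t^2}\phi(t)\;<\;P(Z>t)\;<\;\frac{\phi(t)}{t},\qquad \phi(t)=\frac{1}{\sqrt{2\pi}}e^{-t^2/2},\ t>0,
\]
which evaluates cleanly at the two candidate values $s=\sqrt{\ln(1/\delta)}$ and $t=\sqrt{2\ln(1/\delta)}$: by design $\phi(s)=\sqrt\delta/\sqrt{2\pi}$ and $\phi(t)=\delta/\sqrt{2\pi}$, so the tail bounds reduce to algebraic expressions in $\delta$.

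For the upper direction $\bar\gamma_\delta<\sqrt{2\ln(1/\delta)}$, plug $t=\sqrt{2\ln(1/\delta)}$ into the Mills upper bound to obtain $P(Z>t)<\delta/(t\sqrt{2\pi})$. For $\delta<0.05$ we have $t\ge\sqrt{2\ln 20}>2.4$, so $t\sqrt{2\pi}>1$ and $P(Z>t)<\delta$ follows immediately. This step is routine.

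For the lower direction $\bar\gamma_\delta>\sqrt{\ln(1/\delta)}$, plug $s=\sqrt{\ln(1/\delta)}$ into the Mills lower bound to get $P(Z>s)>s\sqrt\delta/((1+s^2)\sqrt{2\pi})$. The required inequality $P(Z>s)>\delta$ then reduces to the scalar statement
\[
\frac{\ln(1/\delta)}{(1+\ln(1/\delta))^2}>2\pi\delta,
\]
which I would verify by elementary calculus on $(0,0.05)$, noting that the left-hand side is decreasing in $\delta$ while the right-hand side is linear, so that the check collapses to the endpoint $\delta=0.05$. This is where I expect the main obstacle: the elementary Mills lower bound is uncomfortably tight as $\delta$ approaches $0.05$, so if the endpoint numeric check is borderline I would swap in a sharper tail bound such as Gordon's $P(Z>s)\ge 2\phi(s)/(s+\sqrt{s^2+4})$, which has strictly more slack and reduces the problem to a weaker and more clearly verifiable analytic inequality of the same shape. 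No new probabilistic idea is needed beyond standard Gaussian tail technology; the delicate part is purely the quantitative sharpening of the Mills constant near the boundary of the stated range.
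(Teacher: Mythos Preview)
Your overall strategy---reduce each side of~\eqref{eq:GaussQuantBound} to a tail-probability inequality and verify it via Mills-ratio bounds---is exactly the tool the paper uses; the only cosmetic difference is that the paper applies the Mills bounds at $t=\bar\gamma_\delta$ and then inverts, whereas you apply them at the target values $\sqrt{\ln(1/\delta)}$ and $\sqrt{2\ln(1/\delta)}$. The upper inequality goes through cleanly in both versions.

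The lower inequality, however, cannot be closed---not because your bound is too weak, but because the stated inequality is \emph{false} on part of the claimed range. At $\delta=0.04$ one has $\bar\gamma_{0.04}\approx 1.751$ while $\sqrt{\ln(1/0.04)}=\sqrt{\ln 25}\approx 1.794$, so $\sqrt{\ln(1/\delta)}>\bar\gamma_\delta$ there; the crossover occurs near $\delta\approx 0.031$. Consequently your endpoint check at $\delta=0.05$ fails (with $\ln 20\approx 3.00$ one gets $\ln(1/\delta)/(1+\ln(1/\delta))^2\approx 0.19<0.31\approx 2\pi\delta$), Gordon's sharper bound still falls short (it gives $P(Z>\sqrt{\ln 20})\gtrsim 0.041<0.05$), and your monotonicity claim is also off: with $u=\ln(1/\delta)$ the map $u\mapsto u/(1+u)^2$ is \emph{decreasing} for $u>1$, so the left side does not dominate uniformly as $\delta\downarrow 0$ in the way you describe. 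The paper's own proof contains the matching error: it asserts $f(1.645)<0$ for $f(x)=-x^2/2+\ln\sqrt{2\pi}+\ln x+\ln\!\bigl(x^2/(x^2-1)\bigr)$, but in fact $f(1.645)\approx 0.52>0$. The lemma (and the corollary that quotes it) is only valid under a tighter hypothesis such as $\delta\lesssim 0.03$; under the stated hypothesis $\delta<0.05$ no argument of this type can succeed.
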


\begin{proof}[Proof of Lemma~\ref{lem:GaussianQuantile}]
$\bar \gamma_{\delta}$ can be found using the Gaussian tail bound: for  $Z \sim N(0,1)$ and $t>0$,
$$
\frac{e^{-\frac{t^2}{2}}}{\sqrt{2 \pi}} (\frac{1}{t}-\frac{1}{t^3}) < P[Z>t] <\frac{e^{-\frac{t^2}{2}}}{\sqrt{2 \pi}} \frac{1}{t}.
$$
Hence
$$
\frac{e^{-\frac{\bar \gamma_{\delta}^2}{2}}}{\sqrt{2 \pi}} \frac{1}{\bar \gamma_{\delta}}\frac{\bar \gamma_{\delta}^2-1}{\bar \gamma_{\delta}^2}= \frac{e^{-\frac{\bar \gamma_{\delta}^2}{2}}}{\sqrt{2 \pi}} (\frac{1}{\bar \gamma_{\delta}}-\frac{1}{\bar \gamma_{\delta}^3}) < \qquad \delta= P[Z>\bar \gamma_{\delta}] \qquad <\frac{e^{-\frac{\bar \gamma_{\delta}^2}{2}}}{\sqrt{2 \pi}} \frac{1}{\bar \gamma_{\delta}}.
$$
Taking the natural logarithm on all sides, we get
$$
-\frac{\bar \gamma_{\delta}^2}{2} - \ln(\sqrt{2 \pi}) - \ln(\bar \gamma_{\delta}) - \ln(\frac{\bar \gamma_{\delta}^2}{\bar \gamma_{\delta}^2-1}) < \qquad \ln(\delta)\qquad < -\frac{\bar \gamma_{\delta}^2}{2} - \ln(\sqrt{2 \pi}) - \ln(\bar \gamma_{\delta}).
$$
Since $\ln(\frac{1}{\delta}) = -\ln(\delta)$,
\be\label{eq:gamma_deltaBound}
\frac{\bar \gamma_{\delta}^2}{2} + \ln(\sqrt{2 \pi}) + \ln(\bar \gamma_{\delta})   < \qquad \ln(\frac{1}{\delta}) \qquad < \frac{\bar \gamma_{\delta}^2}{2} + \ln(\sqrt{2 \pi}) + \ln(\bar \gamma_{\delta})  + \ln(\frac{\bar \gamma_{\delta}^2}{\bar \gamma_{\delta}^2-1}).
\ee

When $\delta<0.05$, $\bar \gamma_{\delta} > \bar \gamma_{0.05} = 1.645 >1$, hence $\ln(\sqrt{2 \pi}) + \ln(\bar \gamma_{\delta})>0$. The left inequality in \eqref{eq:gamma_deltaBound} implies that
$$
\qquad \frac{\bar \gamma_{\delta}^2}{2} < \frac{\bar \gamma_{\delta}^2}{2} + \ln(\sqrt{2 \pi}) + \ln(\bar \gamma_{\delta})   <  \qquad\ln(\frac{1}{\delta}).
$$
That is, $\bar \gamma_{\delta} < \sqrt{2 \ln(\frac{1}{\delta})}$ which is the second half of \eqref{eq:GaussQuantBound}.

Let $f(x) = - \frac{x^2}{2} + \ln(\sqrt{2 \pi}) + \ln(x)  + \ln(\frac{x^2}{x^2-1})$. Then the derivative
$$
f'(x) = -x + \frac{1}{x} + \frac{-2}{x(x^2-1)}  <0 \qquad \mbox{for all } x>1.
$$
Hence $f(x)$ is an decreasing function when $x>1$. Therefore, when $\delta<0.05$, $f(\bar \gamma_{\delta}) < f(\bar \gamma_{0.05}) = f(1.645) < 0$. Apply this to the right inequality in \eqref{eq:gamma_deltaBound},
$$
\ln(\frac{1}{\delta}) \qquad < \frac{\bar \gamma_{\delta}^2}{2} + \ln(\sqrt{2 \pi}) + \ln(\bar \gamma_{\delta})  + \ln(\frac{\bar \gamma_{\delta}^2}{\bar \gamma_{\delta}^2-1}) \qquad = \bar \gamma_{\delta}^2 + f(\bar \gamma_{\delta}) \qquad < \bar \gamma_{\delta}^2.
$$
This is the first half of \eqref{eq:GaussQuantBound}.
\end{proof}

\iffalse
\subsection{Proof of Corollary~\ref{cor:bound.settingB}}

\begin{proof}[Proof of Corollary~\ref{cor:bound.settingB}]

We only need to prove that the condition \eqref{eq:DP-bound0B.suf1} implies the condition \eqref{eq:DP-bound0B.suf1_Ver2} $\sigma^2 \ge  \max(2, \frac{4np^2}{(n-p)\e},  \frac{4p\gamma_{\delta,np}}{(n-p)\e})$.

When \eqref{eq:DP-bound0B.suf1} holds, firstly, we have
$$
\sigma^2 \ge \frac{[2np+3ln(\frac{1}{\delta})]4p}{(n-p)\e} \ge  \frac{(2np)4p}{(n-p)} \ge (2p)4p \ge 2.
$$

Secondly,
$$
\sigma^2 \ge \frac{[2np+3ln(\frac{1}{\delta})]4p}{(n-p)\e} \ge  \frac{(2np)4p}{(n-p)\e} = \frac{8np^2}{(n-p)\e} \ge \frac{4np^2}{(n-p)\e}.
$$

Lastly, using the bound \eqref{eq:gamma_delta} on the Chi-square quantile $\gamma_{\delta,np}$ provided in the Lemma~\ref{lem:gamma_delta_Bound},
$$
\sigma^2 \ge \frac{[2np+3ln(\frac{1}{\delta})]4p}{(n-p)\e} \ge  \frac{\gamma_{\delta,np}4p}{(n-p)\e}.
$$

Taken together the above three expressions, we get \eqref{eq:DP-bound0B.suf1_Ver2}
$$\sigma^2 \ge  \max(2, \frac{4np^2}{(n-p)\e},  \frac{4p\gamma_{\delta,np}}{(n-p)\e}).$$
This finishes the proof.

\end{proof}
\fi

\subsection{Proof of Corollary~\ref{cor:bound.settingB_Ver3}}\label{sec:proof.cor.B}
\begin{proof}[Proof of Corollary~\ref{cor:bound.settingB_Ver3}]

From \eqref{eq:sigma.bound.new1} $\sigma \ge  \sqrt{\frac{2n-p+\ln(\frac{1}{\delta})}{2(n-p)}} \frac{3\sqrt[4]{p}}{\sqrt{\e}}$, we have
\be\label{eq:sigma2ep.ge}
\ba{cl}
\sigma^2 \e \ge \frac{9\sqrt{p}[(2n-p)+\ln(\frac{1}{\delta})]}{2(n-p)} & = \frac{3\sqrt{p}[2(2n-p)+3\ln(\frac{1}{\delta})]+ 3\sqrt{p}(2n-p)}{2(n-p)} \\
& \ge  \frac{(2\sqrt{p}+1)[2(2n-p) +3\ln(\frac{1}{\delta})]}{2(n-p)} + \frac{2\sqrt{p}(2n-p)}{2(n-p)} \\
& \ge \frac{2\sqrt{p}+1}{2(n-p)} [2(2n-p) +3\ln(\frac{1}{\delta})] + \sqrt{p} \\
& \ge \frac{(2 \sqrt{p} +1)  }{2 (n-p) } \gamma_{\delta; 2(n-p), \sqrt{p}}  + \sqrt{p},
\ea
\ee
where the last inequality comes from bound \eqref{eq:gamma_delta} on the Chi-square quantile provided in the Lemma~\ref{lem:gamma_delta_Bound}.

The first term in the last expression is always positive, thus \eqref{eq:sigma2ep.ge} implies that $\sigma^2 \e \ge \sqrt{p}$ which then implies that $\frac{\sqrt{p}}{\sigma^2} \le \e \le 1$. Hence $\frac{p}{\sigma^2} \le \sqrt{p}$. Therefore, $\gamma_{\delta; 2(n-p), \frac{p}{\sigma^2}} \le \gamma_{\delta; 2(n-p), \sqrt{p}}$ because the upper quantile of the Chi-square distribution increases when the non-central parameter increases. Plug this back into \eqref{eq:sigma2ep.ge}, we have
$$
\sigma^2 \e  \ge \frac{(2 \sqrt{p} +1)  }{2 (n-p) } \gamma_{\delta; 2(n-p), \sqrt{p}}  + \sqrt{p} \ge \frac{(2 \sqrt{p} +1)  }{2 (n-p) } \gamma_{\delta; 2(n-p), \frac{p}{\sigma^2}}  + \sqrt{p}.
$$

That is, \eqref{eq:sigma.bound.new} holds. Hence Theorem~\ref{thm:bound.settingB_Ver3} ensures that the mechanism in setting (B) is $(\e, \delta)$-DP.

This finishes the proof.
\end{proof}

\subsection{Chi-square distribution Tail bound.}
 We first cite the following lemma which is the Lemma8.1 in \cite{birge2001alternative}.
\begin{lemma}\label{lem:Chi-tail}
If $\chi^2$ follows a Chi-square distribution with $n$ degrees of freedom and noncentral parameter $\nu$ then for any $x>0$,
\be\label{eq:Chi-tail}
\ba{cl}
&P[\chi^2 \ge (n+\nu^2) + 2\sqrt{(n+2 \nu^2)x} + 2x ] \le e^{-x}, \\
&P[\chi^2 \le (n+\nu^2) - 2\sqrt{(n+2 \nu^2)x} ] \le e^{-x}.
\ea
\ee
\end{lemma}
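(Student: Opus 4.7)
The plan is to reduce both tail inequalities to Chernoff bounds on the noncentral chi-square moment generating function, then calibrate the auxiliary parameter so the exponent collapses to exactly $-x$. Write $\chi^2 = \sum_{i=1}^n (Z_i+\mu_i)^2$ with $Z_i \sim N(0,1)$ i.i.d.\ and $\sum_i \mu_i^2 = \nu^2$. A standard computation gives the Laplace transform
\begin{equation*}
E[e^{t\chi^2}] = (1-2t)^{-n/2}\exp\!\left(\frac{\nu^2 t}{1-2t}\right),\qquad t < \tfrac{1}{2},
\end{equation*}
and the analogous expression with $t$ replaced by $-t$ is valid for all $t>0$.

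For the upper tail, apply Markov's inequality $P[\chi^2 \ge y] \le e^{-ty}E[e^{t\chi^2}]$ and use the elementary bounds $-\tfrac{n}{2}\log(1-2t) \le nt + \tfrac{nt^2}{1-2t}$ and $\tfrac{\nu^2 t}{1-2t}=\nu^2 t + \tfrac{2\nu^2 t^2}{1-2t}$, valid for $0<t<1/2$. This yields
\begin{equation*}
\log P[\chi^2 \ge y] \;\le\; t\bigl[(n+\nu^2)-y\bigr] + \frac{(n+2\nu^2)\,t^2}{1-2t}.
\end{equation*}
Substitute $y=(n+\nu^2)+2\sqrt{(n+2\nu^2)x}+2x$ and reparametrize $t=\tfrac{u}{2(1+u)}$, so that $\tfrac{t^2}{1-2t}=\tfrac{u^2}{4(1+u)}$. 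The right-hand side simplifies to a multiple of $(\sigma u - 2\sqrt{x})^2 - 4x$, where $\sigma^2=n+2\nu^2$; choosing $u=2\sqrt{x}/\sigma$ makes the squared term vanish and delivers the bound $-x$, hence $P[\chi^2 \ge y]\le e^{-x}$.

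For the lower tail, apply $P[\chi^2 \le y] \le e^{ty}E[e^{-t\chi^2}]$ for $t>0$, using the bounds $-\tfrac{n}{2}\log(1+2t) \le -nt + nt^2$ and $\tfrac{-\nu^2 t}{1+2t} \le -\nu^2 t + 2\nu^2 t^2$. This gives
\begin{equation*}
\log P[\chi^2 \le y] \;\le\; t\bigl[y-(n+\nu^2)\bigr] + (n+2\nu^2)\,t^2.
\end{equation*}
With $y=(n+\nu^2)-2\sqrt{(n+2\nu^2)x}$ the right-hand side is a quadratic in $t$ minimized at $t=\sqrt{x}/\sqrt{n+2\nu^2}$, where its value equals $-x$, completing the second inequality.

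The delicate step is the upper tail: the naive Chernoff exponent is not convex in a clean way because of the factor $1/(1-2t)$, and a sub-optimal choice of $t$ gives the right $\sqrt{x}$-scaling but with worse constants. The reparametrization $t=u/(2(1+u))$ is what converts the optimization into an exact completion of squares; spotting that substitution (or, equivalently, solving the first-order condition in closed form) is the only nontrivial piece, since the lower-tail argument is a routine quadratic minimization and the MGF derivation is standard.
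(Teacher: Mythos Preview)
Your proof is correct, and in fact supplies an argument the paper does not: the paper merely cites this lemma as Lemma~8.1 of Birg\'e (2001) without giving any proof. Your Chernoff-bound derivation is essentially the Birg\'e--Massart argument from that reference, so you have reconstructed the original proof rather than found an alternative.

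One small imprecision: after the substitution $t=u/(2(1+u))$, the exponent does not simplify to ``a multiple of $(\sigma u-2\sqrt{x})^2-4x$'' but rather to
\[
\frac{(\sigma u-2\sqrt{x})^2}{4(1+u)}-x,
\]
i.e.\ the constant term is already $-x$ rather than $-4x$ times the multiplier. This does not affect the conclusion, since setting $u=2\sqrt{x}/\sigma$ kills the squared term and leaves exactly $-x$, but the intermediate description is off.
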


Using these bounds, we can find a simple bound for $\gamma_{\delta; 2(n-p), \sqrt{p}}$ as in the following Lemma.
\begin{lemma}\label{lem:gamma_delta_Bound}
\be\label{eq:gamma_delta}
\gamma_{\delta; 2(n-p), \sqrt{p}} \le 2(2n-p)+3ln(\frac{1}{\delta}).
\ee
\end{lemma}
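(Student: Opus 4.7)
The plan is to invoke Lemma~\ref{lem:Chi-tail} (the upper-tail half) with noncentral parameter $\nu=0$ and degrees of freedom $n$ replaced by $np$, which yields the central $\chi^2$ tail bound
$$
P\bigl[\chi^2 \ge np + 2\sqrt{np\,x} + 2x\bigr] \le e^{-x}
$$
for all $x>0$. To turn this into a bound on the upper $\delta$-quantile, I would simply choose $x = \ln(1/\delta)$, so that $e^{-x} = \delta$, and conclude that $\chi^2$ exceeds $np + 2\sqrt{np\,\ln(1/\delta)} + 2\ln(1/\delta)$ with probability at most $\delta$. By the definition of $\gamma_{\delta,np}$ as the upper $\delta$-quantile, this gives
$$
\gamma_{\delta,np} \;\le\; np + 2\sqrt{np\,\ln(1/\delta)} + 2\ln(1/\delta).
$$

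The remaining step is to convert this expression into the clean form $2np+3\ln(1/\delta)$ appearing in \eqref{eq:gamma_delta}. I would do this by a single application of the AM-GM inequality to the cross term: $2\sqrt{np\cdot\ln(1/\delta)} \le np + \ln(1/\delta)$. Substituting this bound in gives
$$
\gamma_{\delta,np} \;\le\; np + \bigl(np + \ln(1/\delta)\bigr) + 2\ln(1/\delta) \;=\; 2np + 3\ln(1/\delta),
$$
which is exactly \eqref{eq:gamma_delta}.

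There is no real obstacle here; the argument is essentially two lines once Lemma~\ref{lem:Chi-tail} is in hand. The only minor care needed is to verify that the central case $\nu=0$ is indeed covered by the statement of Lemma~\ref{lem:Chi-tail} (it is, since the lemma is stated for general $\nu$ and specializes correctly at $\nu=0$), and to note that the bound is intentionally loose in trading the square-root term for a linear one so as to produce the simple closed form used elsewhere in the paper (in particular in the proof of Corollary~\ref{cor:bound.settingB}).
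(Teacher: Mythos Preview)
Your proposal is correct and follows essentially the same approach as the paper: apply Lemma~\ref{lem:Chi-tail} with $\nu=0$ and $x=\ln(1/\delta)$, then bound the cross term $2\sqrt{np\,\ln(1/\delta)}$ by $np+\ln(1/\delta)$ via AM--GM to obtain $2np+3\ln(1/\delta)$. There is nothing to add.
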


\begin{proof}[Proof of Lemma~\ref{lem:gamma_delta_Bound}]
Plug $x=ln(\frac{1}{\delta})$  into \eqref{eq:Chi-tail} for the Chi-square distribution with $2(n-p)$ degrees of freedom and noncentral parameter $\nu=\sqrt{p}$,
$$
P\left[\chi^2 \ge [2(n-p) + p ] + 2\sqrt{[2(n-p) + p ]ln(\frac{1}{\delta})} + 2ln(\frac{1}{\delta}) \right]  \le e^{-ln(\frac{1}{\delta})} = e^{ln(\delta)} = \delta.
$$
Since by definition,
$$
P[\chi^2 \ge  \gamma_{\delta; 2(n-p), \sqrt{p}}] = \delta,
$$
we have
$$
\ba{cl}
\gamma_{\delta; 2(n-p), \sqrt{p}} & \le [2(n-p) + p ] + 2\sqrt{[2(n-p) + p ]ln(\frac{1}{\delta})} + 2ln(\frac{1}{\delta}) \\
& \le [2(n-p) + p ] + \{[2(n-p) + p ] + ln(\frac{1}{\delta})\} + 2ln(\frac{1}{\delta}) \\
& = 2(2n-p) + 3ln(\frac{1}{\delta}).
\ea
$$
\end{proof}

\subsection{Proof of Lemma \ref{lem:int_unif}}\label{sec:proof.unif}
\begin{proof}[Proof of Lemma~\ref{lem:int_unif}]
 Let $\mate_1 = \frac{\bm{proj}_\mathfrak{S} (\matv)}{\| \bm{proj}_\mathfrak{S} (\matv) \|}$ be the unit vector along the direction of the projection for $\matv$ onto the subspace $\mathfrak{S}$. Then $\matb \matv^T = \matb \mate_1^T \| \bm{proj}_\mathfrak{S} (\matv) \|$.

  For the subspace $\mathfrak{S}$, we can find unit vectors $\mate_1, ..., \mate_q$ which are orthogonal to each other. They then form the base vectors for a coordinate systems of $\mathfrak{S}$. Hence $\matb = \sum_{i=1}^{q} \tilde{b}_i \mate_i$ with $\tilde{b}_i = \matb \mate_i^T $ being the $i$-th coordinate of $\matb$ under this coordinate system. Hence
  $$
  \matb \matv^T = \tilde{b}_1 \| \bm{proj}_\mathfrak{S} (\matv) \|.
  $$

  Notice $\tilde{b}_1$ is the first coordinate of a random vector uniformly distributed over the $(q-1)$-dimensional unit sphere, and its probability density is known to be $\frac{1}{\bar c_q} (1-u^2)^\frac{q-3}{2} $. Thus
  $$
  \ba{cl}
  \int g( \matb \matv^T) d \mu_q(\matb) & = \int g( \| \bm{proj}_\mathfrak{S} (\matv) \| \tilde{b}_1 ) d \mu_q(\matb) \\
  & =  \int_{u=-1}^{1} g( \|\bm{proj}_\mathfrak{S} (\matv) \| u) \frac{1}{\bar c_q} (1-u^2)^\frac{q-3}{2} du.
  \ea
  $$

\end{proof}

\subsection{Proof of Lemma \ref{lem:GRatio}}\label{sec:proof.GRatio}

\begin{proof}[Proof of Lemma~\ref{lem:GRatio}]
$$
\ba{cl}
G_q'(t) = \int_{u=-1}^{1} e^{ t u} u (1-u^2)^\frac{q-2}{2} du  = \int_{u=0}^{1} [e^{ t u} - e^{ - tu}] u (1-u^2)^\frac{q-2}{2} du.
\ea
$$
Since $e^{ t u} - e^{ - tu} >0$ for all positive $tu$ values, the integrand is always positive in the last integral. Hence
$G_q'(t) >0,$
when $t>0$. Thus $G_q(t)$ is an increasing function when $t>0$.

Furthermore, using integral by parts,
$$
\ba{cl}
G_q'(t) & = \int_{u=-1}^{1} e^{ t u} u (1-u^2)^\frac{q-2}{2}  \\
& = \int_{u=-1}^{1} e^{ t u}  \frac{1}{2} (1-u^2)^\frac{q-2}{2} d(u^2) \\
& = \int_{u=-1}^{1} e^{ t u}  \frac{-1}{q} d[(1-u^2)^\frac{q}{2}] \\
& =  \frac{-e^{ t u}}{q} (1-u^2)^\frac{q}{2} |_{u=-1}^{1} -  \int_{u=-1}^{1} \frac{-1}{q} (1-u^2)^\frac{q}{2} d(e^{ t u}) \\
& =  0 -  \int_{u=-1}^{1} \frac{-1}{q} (1-u^2)^\frac{q}{2} e^{ t u} t du \\
& =  \frac{t}{q}  \int_{u=-1}^{1} e^{ t u} (1-u^2)^\frac{q}{2}  du \\
& \le  \frac{t}{q}  \int_{u=-1}^{1} e^{ t u} (1-u^2)^\frac{q-2}{2}  du \; = \frac{t}{q} G_q(t).
\ea
$$

For any $t_2 > t_1 >0$, since $G_q(t) > 0$, we have
$$
\ba{cl}
\ln G_q(t_2) - \ln G_q(t_1) = \int_{t=t_1}^{t_2} \frac{G_q'(t)}{G_q(t)} dt  \le \int_{t=t_1}^{t_2} \frac{t}{q}  dt = \frac{t_2^2-t_1^2}{2q}.
\ea
$$

Hence
$$
\frac{G_q(t_2)}{G_q(t_1)}  \le e^{\frac{|t_2^2-t_1^2|}{2q}},
$$
which is also true when $t_2 < t_1$ because $G_q(t)$ is increasing.

\end{proof}

\subsection{Probability Density under setting (C)}\label{sec:DenC}
\begin{lemma}\label{lem:DenC}
Under setting (C): $\matY =  \matA \matX + \matC$, we have
\be\label{eq:DenC}
p_{\matX}(\maty) = (\frac{1}{\sqrt{2\pi}\sigma})^{np} e^{-\frac{\|\maty\|^2}{2 \sigma^2}} e^{-\frac{\|\matX  \|^2}{2 \sigma^2}} \int_{\matA \in \mathcal{O}_{n \times n}}  e^{\frac{tr(\matA^T\maty\matX^T)}{ \sigma^2}} d \mu(\matA).
\ee
\end{lemma}

\begin{proof}[Proof of Lemma~\ref{lem:DenC}]
From \eqref{eq:modelC}, $\matC = \matY- \matA \matX$. Hence
$$
p_{\matX}(\maty)  = \int_{\matA \in \mathcal{O}_{n \times n}} (\frac{1}{\sqrt{2\pi}\sigma})^{np} e^{-\frac{\| \maty - \matA \matX\|^2}{2 \sigma^2}} d \mu(\matA).
$$
Since
$$
\ba{cl}
\| \maty - \matA \matX\|^2 = \|\maty\|^2 + \| \matA \matX\|^2 - 2 tr[\maty(\matA\matX)^T] & = \|\maty\|^2 + \| \matX\|^2 - 2 tr[\maty\matX^T\matA^T] \\
& = \|\maty\|^2 + \| \matX\|^2 - 2 tr(\matA^T\maty\matX^T),
\ea
$$
the probability density above becomes
$$
p_{\matX}(\maty) = \int_{\matA \in \mathcal{O}_{n \times n}}  (\frac{1}{\sqrt{2\pi}\sigma})^{np} e^{-\frac{\|\maty\|^2 + \|\matX\|^2 - 2tr(\matA^T\maty\matX^T)}{2 \sigma^2}} d \mu(\matA).
$$
Thus the density is the same as in \eqref{eq:DenC}. Notice that this is also the same as the probability density \eqref{eq:DenB} in setting (B).
\end{proof}

\end{appendix}

%%%%%%%%%%%%%%%%%%%%%%%%%%%%%%%%%%%%%%%%%%%%%%
%% Support information, if any,             %%
%% should be provided in the                %%
%% Acknowledgements section.                %%
%%%%%%%%%%%%%%%%%%%%%%%%%%%%%%%%%%%%%%%%%%%%%%
\iffalse
\begin{acks}[Acknowledgments]
The authors would like to thank ...
\end{acks}
\fi

%%%%%%%%%%%%%%%%%%%%%%%%%%%%%%%%%%%%%%%%%%%%%%
%% Funding information, if any,             %%
%% should be provided in the                %%
%% funding section.                         %%
%%%%%%%%%%%%%%%%%%%%%%%%%%%%%%%%%%%%%%%%%%%%%%
\section*{Acknowledgment}
This work is supported in part by NIH grants R01LM014027 and U24 AA029959-01.

%%%%%%%%%%%%%%%%%%%%%%%%%%%%%%%%%%%%%%%%%%%%%%
%% Supplementary Material, including data   %%
%% sets and code, should be provided in     %%
%% {supplement} environment with title      %%
%% and short description. It cannot be      %%
%% available exclusively as external link.  %%
%% All Supplementary Material must be       %%
%% available to the reader on Project       %%
%% Euclid with the published article.       %%
%%%%%%%%%%%%%%%%%%%%%%%%%%%%%%%%%%%%%%%%%%%%%%
\iffalse
\begin{supplement}
\stitle{Title of Supplement A}
\sdescription{Short description of Supplement A.}
\end{supplement}
\begin{supplement}
\stitle{Title of Supplement B}
\sdescription{Short description of Supplement B.}
\end{supplement}
\fi

\bibliography{references}

\begin{thebibliography}{23}
\providecommand{\natexlab}[1]{#1}
\providecommand{\url}[1]{\texttt{#1}}
\expandafter\ifx\csname urlstyle\endcsname\relax
  \providecommand{\doi}[1]{doi: #1}\else
  \providecommand{\doi}{doi: \begingroup \urlstyle{rm}\Url}\fi

\bibitem[Abadi et~al.(2016)Abadi, Chu, Goodfellow, McMahan, Mironov, Talwar,
  and Zhang]{DP_5}
M.~Abadi, A.~Chu, I.~Goodfellow, H.~B. McMahan, I.~Mironov, K.~Talwar, and
  L.~Zhang.
\newblock Deep learning with differential privacy.
\newblock In \emph{Proceedings of the 2016 ACM SIGSAC Conference on Computer
  and Communications Security}, CCS '16, page 308–318. Association for
  Computing Machinery, 2016.

\bibitem[Abowd(2018)]{Bureau}
J.~M. Abowd.
\newblock The u.s. census bureau adopts differential privacy.
\newblock In \emph{Proceedings of the 24th ACM SIGKDD International Conference
  on Knowledge Discovery \& Data Mining}, KDD '18, page 2867, New York, NY,
  USA, 2018.

\bibitem[Awan and Slavkovi{\'c}(2018)]{DP_6}
J.~Awan and A.~Slavkovi{\'c}.
\newblock Differentially private uniformly most powerful tests for binomial
  data.
\newblock \emph{Advances in Neural Information Processing Systems}, pages
  4208--4218, 2018.

\bibitem[Birg{\'e}(2001)]{birge2001alternative}
L.~Birg{\'e}.
\newblock An alternative point of view on lepski's method.
\newblock \emph{Lecture Notes-Monograph Series}, pages 113--133, 2001.

\bibitem[Blocki et~al.(2012)Blocki, Blum, Datta, and Sheffet]{DP_1}
J.~Blocki, A.~Blum, A.~Datta, and O.~Sheffet.
\newblock The johnson-lindenstrauss transform itself preserves differential
  privacy.
\newblock In \emph{2012 IEEE 53rd Annual Symposium on Foundations of Computer
  Science}, pages 410--419, 2012.

\bibitem[Canonne et~al.(2019)Canonne, Kamath, McMillan, Smith, and
  Ullman]{DP_7}
C.~Canonne, G.~Kamath, A.~McMillan, A.~Smith, and J.~Ullman.
\newblock The structure of optimal private tests for simple hypotheses.
\newblock In \emph{Proceedings of the 51st Annual ACM SIGACT Symposium on
  Theory of Computing}, pages 310--321, 06 2019.

\bibitem[Chaudhuri and Monteleoni(2008)]{DP_4}
K.~Chaudhuri and C.~Monteleoni.
\newblock Privacy-preserving logistic regression.
\newblock In \emph{Proceedings of the 21st International Conference on Neural
  Information Processing Systems}, NIPS'08, page 289–296, Red Hook, NY, USA,
  2008. Curran Associates Inc.

\bibitem[Chen and Liu(2011)]{perturbation}
K.~Chen and L.~Liu.
\newblock Geometric data perturbation for privacy preserving outsourced data
  mining.
\newblock \emph{Knowledge and Information Systems}, 29\penalty0 (3):\penalty0
  657--695, 2011.

\bibitem[Dimitrakakis et~al.(2017)Dimitrakakis, Nelson, Zhang, Mitrokotsa, and
  Rubinstein]{DP_8}
C.~Dimitrakakis, B.~Nelson, Z.~Zhang, A.~Mitrokotsa, and B.~I.~P. Rubinstein.
\newblock Differential privacy for bayesian inference through posterior
  sampling.
\newblock \emph{Journal of Machine Learning Research}, 18\penalty0
  (11):\penalty0 1--39, 2017.

\bibitem[Ding et~al.(2020)Ding, Miao, and Wu]{ding2020privacy}
A.~A. Ding, G.~Miao, and S.~S. Wu.
\newblock On the privacy and utility properties of triple matrix-masking.
\newblock \emph{Journal of Privacy and Confidentiality}, 10\penalty0 (2), 2020.

\bibitem[Ding et~al.(2017)Ding, Kulkarni, and Yekhanin]{microsoft}
B.~Ding, J.~Kulkarni, and S.~Yekhanin.
\newblock Collecting telemetry data privately.
\newblock In I.~Guyon, U.~V. Luxburg, S.~Bengio, H.~Wallach, R.~Fergus,
  S.~Vishwanathan, and R.~Garnett, editors, \emph{Advances in Neural
  Information Processing Systems}, volume~30. Curran Associates, Inc., 2017.

\bibitem[Duchi et~al.(2013)Duchi, Jordan, and Wainwright]{LDP}
J.~C. Duchi, M.~I. Jordan, and M.~J. Wainwright.
\newblock Local privacy and statistical minimax rates.
\newblock In \emph{2013 IEEE 54th Annual Symposium on Foundations of Computer
  Science}, pages 429--438, 2013.

\bibitem[Dwork(2006)]{dwork2006differential}
C.~Dwork.
\newblock Differential privacy.
\newblock In \emph{International Colloquium on Automata, Languages, and
  Programming}, pages 1--12. Springer, 2006.

\bibitem[Dwork(2008)]{dwork2008differential}
C.~Dwork.
\newblock Differential privacy: A survey of results.
\newblock In \emph{International conference on theory and applications of
  models of computation}, pages 1--19. Springer, 2008.

\bibitem[Dwork et~al.(2006{\natexlab{a}})Dwork, Kenthapadi, McSherry, Mironov,
  and Naor]{DP_0}
C.~Dwork, K.~Kenthapadi, F.~McSherry, I.~Mironov, and M.~Naor.
\newblock Our data, ourselves: Privacy via distributed noise generation.
\newblock In \emph{Annual International Conference on the Theory and
  Applications of Cryptographic Techniques}, pages 486--503. Springer,
  2006{\natexlab{a}}.

\bibitem[Dwork et~al.(2006{\natexlab{b}})Dwork, McSherry, Nissim, and
  Smith]{dwork2006CalibratingNoise}
C.~Dwork, F.~McSherry, K.~Nissim, and A.~Smith.
\newblock Calibrating noise to sensitivity in private data analysis.
\newblock In S.~Halevi and T.~Rabin, editors, \emph{Theory of Cryptography},
  pages 265--284, Berlin, Heidelberg, 2006{\natexlab{b}}. Springer Berlin
  Heidelberg.
\newblock ISBN 978-3-540-32732-5.

\bibitem[Dwork et~al.(2014)Dwork, Roth, et~al.]{dwork2014algorithmic}
C.~Dwork, A.~Roth, et~al.
\newblock The algorithmic foundations of differential privacy.
\newblock \emph{Found. Trends Theor. Comput. Sci.}, 9\penalty0 (3-4):\penalty0
  211--407, 2014.

\bibitem[Erlingsson et~al.(2014)Erlingsson, Pihur, and Korolova]{google}
U.~Erlingsson, V.~Pihur, and A.~Korolova.
\newblock {RAPPOR}: Randomized aggregatable privacy-preserving ordinal
  response.
\newblock In \emph{Proceedings of the 2014 ACM SIGSAC Conference on Computer
  and Communications Security}, CCS '14, page 1054–1067. Association for
  Computing Machinery, 2014.

\bibitem[Karwa and Slavković(2016)]{DP_network}
V.~Karwa and A.~Slavković.
\newblock Inference using noisy degrees: Differentially private $\beta$-model
  and synthetic graphs.
\newblock \emph{The Annals of Statistics}, 44\penalty0 (1):\penalty0 87--112,
  2016.

\bibitem[Oliveira and Zaiane(2010)]{AX}
S.~R.~M. Oliveira and O.~R. Zaiane.
\newblock Privacy preserving clustering by data transformation.
\newblock \emph{J. Inf. Data Manag.}, 1:\penalty0 67–82, 2010.

\bibitem[Sheffet(2017)]{DP_2}
O.~Sheffet.
\newblock Differentially private ordinary least squares.
\newblock In \emph{Proceedings of the 34th International Conference on Machine
  Learning - Volume 70}, ICML'17, page 3105–3114. JMLR.org, 2017.

\bibitem[Wu et~al.(2017)Wu, Chen, Burr, and Zhang]{wu2017new}
S.~S. Wu, S.~Chen, D.~L. Burr, and L.~Zhang.
\newblock A new data collection technique for preserving privacy.
\newblock \emph{Journal of Privacy and Confidentiality}, 7\penalty0
  (3):\penalty0 5, 2017.

\bibitem[Zhang et~al.(2012)Zhang, Zhang, Xiao, Yang, and Winslett]{DP_3}
J.~Zhang, Z.~Zhang, X.~Xiao, Y.~Yang, and M.~Winslett.
\newblock Functional mechanism: Regression analysis under differential privacy.
\newblock \emph{Proc. VLDB Endow.}, 5\penalty0 (11):\penalty0 1364–1375,
  2012.

\end{thebibliography}
\bibliographystyle{abbrvnat}

\end{document}